\documentclass[a4paper, 11pt]{article}
\usepackage[margin=1in]{geometry}
\usepackage{graphicx} % Required for inserting images
\usepackage{amsthm}
\usepackage{amsmath}
\usepackage{amssymb}
\usepackage{bbm}
\usepackage{pdfpages}
\usepackage{float}
\usepackage{comment}
\usepackage{natbib}
\usepackage{float}
\usepackage{booktabs}
\usepackage{url}
\usepackage{multirow}
\usepackage{hyperref}
\usepackage{cleveref}
\usepackage{authblk}

\newcommand{\indep}{\perp \!\!\! \perp}
\newcommand{\PP}{\mathbb{P}}
\newcommand{\QQ}{\mathbb{Q}}
\newcommand{\E}{\mathbb{E}}
\newcommand{\norm}[1]{\left\lVert #1 \right\rVert}

\newtheorem{theorem}{Theorem}

\newtheorem{lemma}{Lemma}
\newtheorem{assumption}{Assumption}
\newtheorem{remark}{Remark}

\title{Doubly Robust Machine Learning for Population Size Estimation with Missing Covariates: Application to Gaza Conflict Mortality}

\author{
\parbox{5cm}{\centering
Mateo Dulce Rubio \\
\small Center for Data Science \\
\small New York University \\
\small \texttt{mateo.d@nyu.edu}\\
\texttt{ }} 
\parbox{5cm}{\centering
Edward H. Kennedy \\
\small Department of Statistics \& Data Science \\
\small Carnegie Mellon University \\
\small \texttt{edward@stat.cmu.edu}}
\hspace{0.4cm}
\parbox{5.1cm}{\centering
Nicholas P. Jewell \\
\small Department of Medical Statistics \\
\small London School of Hygiene and Tropical Medicine \\
\small \texttt{nicholas.jewell@lshtm.ac.uk}}
}

\date{}

\begin{document}

\def\spacingset#1{\renewcommand{\baselinestretch}%
{#1}\small\normalsize} \spacingset{1}

\maketitle

\spacingset{1.5}
\begin{abstract}
    
    Population size estimation from capture-recapture data is central for studying hard-to-reach populations, incorporating auxiliary covariates to account for heterogeneous capture probabilities and recapture dependencies. However, missing attributes pose a critical methodological challenge due to reluctance to share sensitive information, data collection limitations, and imperfect record linkage. Existing approaches either ignore missingness or rely on \textit{a priori} imputation, potentially introducing substantial bias. In this work, we develop a novel nonparametric estimation framework using a Missing at Random assumption to identify capture probabilities under missing covariates. Using semiparametric efficiency theory, we construct one-step estimators that combine efficiency, robustness, and finite-sample validity: they approximately achieve the nonparametric efficiency bound, accommodate flexible machine learning methods through a doubly robust structure, and provide approximately valid inference for any sample size. Simulations demonstrate substantial improvements over naive imputation approaches, with our doubly robust ML estimators maintaining valid inference even at high missingness rates where competing methods fail. We apply our methodology to re-estimate mortality in the Gaza Strip from October 7, 2023, to June 30, 2024, using three-list capture-recapture data with missing demographic information. Our approach yields more conservative yet precise estimates compared to previous methods, indicating the true death toll exceeds official statistics by approximately 26\%. Our framework provides practitioners with principled tools for handling incomplete data in conflict settings and other applications with hard-to-reach populations.

\end{abstract}

\section{Introduction}

Population size estimation based on capture-recapture data has a long history in ecology and biology \citep{schwarz1999estimating, otis1978statistical}, and more recently has been applied to human populations to estimate the size of hard-to-reach groups such as people who use drugs \citep{jones2016problem}, living with HIV \citep{wesson2024evaluating}, homeless individuals \citep{coumans2017estimating}, victims of armed conflicts \citep{patrick2003many, jamaluddine2025lancet}, and other vulnerable communities. The foundational Lincoln-Petersen estimator, while elegant in its simplicity, relies on assumptions of homogeneous capture probabilities across all individuals and independent capture episodes \citep{lincoln1930calculating, petersen1896yearly}. However, these assumptions are rarely met in practice when working with human populations, where demographic or social factors introduce heterogeneity and dependence in the capture-recapture patterns, leading to biased and unreliable population size estimates \citep{manrique2021capture}.

Early proposals in the literature leveraged the fact that additional variables are typically available beyond the necessary capture-recapture profiles to incorporate heterogeneous capture probabilities into more general and realistic population size estimation frameworks \citep{chao1987estimating, huggins1989statistical, tilling1999capture, pollock2002use}. These covariates may include demographic characteristics, behavioral patterns, geographic information, or other factors that influence an individual's probability of being observed across different sources. Moreover, these techniques have also been recently extended to nonparametric models, moving away from unrealistic and restrictive parametric assumptions (e.g., linear relationships or complete stratification) and enabling more flexible modeling of complex capture-recapture processes \citep{das2024doubly, dulce2024population, you2021estimation}.

However, a key limitation hindering the full applicability and adoption of these methodologies is missing covariate information, which is prevalent when working with hard-to-reach populations \citep{ellard2015finding}. Individuals in these groups often have incomplete records due to various factors: reluctance to provide sensitive information, administrative challenges in data collection, high mobility, mistrust, or systematic avoidance of official services \citep{johnston2010sampling}. Moreover, probabilistic record linkage across capture-recapture episodes, a necessary first step in population size estimation when unique identifiers are unavailable, frequently generates missing covariate information in the linked data due to conflicting values between matched records \citep{fellegi1969theory, binette2022almost}. 

This missing data problem creates a fundamental tension in population size estimation: incorporating covariate information is essential for accounting for capture heterogeneity and recapture dependencies, yet naive approaches to handling missingness such as complete case analysis or simple imputation can introduce significant bias in downstream analysis \citep{zwane2007analysing, little2019statistical}, particularly when missingness is related to both the covariates and the capture process itself. This paper addresses this critical gap by developing a novel nonparametric estimation framework for population size estimation in the presence of missing covariates. Our approach combines a Missing at Random assumption with doubly robust machine learning methods to provide flexible estimation that is robust to misspecification of either the missingness mechanism or the heterogeneous capture probability models. This provides practitioners with principled statistical tools for handling missing covariates when estimating the size of hard-to-reach populations.

\paragraph*{Paper Organization and Contributions.} After summarizing the relevant literature on population size estimation in Section \ref{sec:literature}, the key contributions of this work are:

\begin{itemize}
    \item We establish a novel identification expression for the capture probability under a Missing at Random assumption combined with a no highest-order interaction assumption in \Cref{sec:identification}.

    \item In \Cref{sec:estimation}, we develop nonparametric doubly robust estimators for the capture probability and total population size that are approximately optimal and normally distributed for any finite sample size, even when flexible machine learning methods are used for nuisance function estimation.

    \item We empirically demonstrate the robustness and efficiency gains of our approach in finite samples through simulation studies in Section \ref{sec:experiments}, comparing against naive imputation methods and plug-in estimators typically used in capture-recapture applications.

    \item We apply our methodology to estimate conflict mortality in the Gaza Strip from October 7, 2023, to June 30, 2024, obtaining 59,441 estimated deaths (95\% CI: 50,708-68,173), approximately 10\% lower than previous estimates. Yet, our results imply that conflict deaths exceed official statistics by approximately 26\%, highlighting the practical importance of robust statistical methods for conflict casualty estimation with incomplete data (Section \ref{sec:gaza}).
    
\end{itemize}

\section{Related Work on Population Size Estimation}
\label{sec:literature}

In this section we summarize the relevant literature on capture-recapture methods for population size estimation, with particular focus on approaches that incorporate covariates and handle missing data. We begin with classical methods, discuss extensions that account for heterogeneity and list dependence, and conclude with recent developments in nonparametric and machine-learning-based approaches.

\subsection{Classic Capture-Recapture Methods}

Capture-recapture studies, or multiple systems estimation, aim to estimate total population size by combining multiple partial samples from the same closed population and tracking which individuals are observed across different capture occasions. We refer to each capture episode as a \textit{list}, where individuals are recorded and can be linked (albeit imperfectly) across different lists. The key challenge lies in the unidentifiability of unobserved individuals: we only observe those who are captured at least once, yet the target population includes those who remain completely undetected across all available lists. This creates the fundamental problem in capture-recapture settings of extrapolating from the observed capture patterns to the unobserved population to estimate the total population size \citep{fienberg1972multiple, bird2018multiple}.

The simplest case involves two capture episodes. Consider a population of unknown size $n$, where $N_1$ individuals are captured in the first list, $N_2$ in the second one, and $N_{12}$ individuals appear in both samples. The total number of unique observed individuals is then $N = N_1 + N_2 - N_{12}$. We aim to estimate the total population size $n = N + N_0$ using the observed counts and capture patterns, where $N_0$ is the population unobserved by both lists.

Assume that (1) the population is closed between sampling occasions, (2) all individuals have equal capture probabilities bounded away from zero, (3) capture events are independent, and (4) no identification errors occur across lists. Moreover, let $p_1$ and $p_2$ denote the capture probabilities for lists 1 and 2. Therefore, from our assumptions it follows that the expected observed overlap between lists is 
$$E[N_{12}] = np_1p_2. $$ 
Since the marginal capture probabilities can be estimated as $\widehat{p}_1 = N_1/n$ and $\widehat{p}_2 = N_2/n$, by substituting into the overlap equation we can derive the \textit{Lincoln-Petersen estimator} for the total population size $\widehat{n}_{LP}$ \citep{lincoln1930calculating, petersen1896yearly}:
\begin{equation}
    N_{12} = n \frac{N_1}{n} \frac{N_2}{n} \Longrightarrow \widehat{n}_{LP} = \frac{N_1N_2}{N_{12}}.
\end{equation}

Multiple extensions to this setting have been proposed to relax the independence and homogeneity assumptions. When multiple lists are available, log-linear models use a weaker notion of independence allowing lower-order interactions between capture occasions, but assuming that there is no interaction effect involving all capture events simultaneously \citep{fienberg1972multiple, bishop2007discrete}. To accommodate individual heterogeneity and list dependence, \cite{otis1978statistical} introduced the $M_h$, $M_t$, and $M_{th}$ models that allow for individual-specific or time-varying capture probabilities while preserving independence of capture events. However, these classical approaches estimate heterogeneous capture probabilities from observed capture patterns alone, without incorporating auxiliary covariate information.

\subsection{Covariate-Adjusted Methods}

Typically, capture-recapture studies collect additional variables during capture events, such as demographic characteristics (age, gender), behavioral patterns (substance use, service utilization), geographic information (residence location, mobility patterns), or socioeconomic factors (employment status, income) \citep{pollock2002use}. It is natural then to incorporate this information with two aims: accounting for heterogeneous capture probabilities across individuals, and relaxing identification assumptions to hold conditionally given observed covariates \citep{aleshin2024central}. In general, including covariate information in capture-recapture models enables more flexible and realistic population size estimation.

The most natural covariate-adjusted approach is stratification, where the population is divided into homogeneous subgroups based on discrete covariates and separate analyses are conducted within each stratum \citep{bishop2007discrete, kurtz2014local}. This approach accommodates heterogeneity across covariate levels and assumes conditional independence, where captures are independent within each stratum. However, stratification suffers from the curse of dimensionality with multiple covariates and can lead to sparse counts in high-dimensional settings.

More flexible approaches model capture probabilities directly as functions of individual covariates through regression-based methods \citep{huggins1989statistical, pollock2002use}. These approaches accommodate both continuous and discrete covariates through generalized linear models, enabling more nuanced modeling of capture heterogeneity while maintaining interpretability \citep{tilling1999capture}. However, parametric specifications may still impose restrictive functional form assumptions on the relationship between covariates and capture probabilities, which can introduce bias when these assumptions are violated.

\subsection{Nonparametric Approaches}

Recent developments have moved away from parametric specifications to accommodate more flexible relationships between covariates and capture probabilities. These approaches use nonparametric methods such as kernel smoothing, local polynomials, or Bayesian latent-class models to account for capture heterogeneity without imposing restrictive functional forms \citep{chen2000nonparametric, hwang2011semiparametric, manrique2016bayesian}. These methods offer greater flexibility than their parametric counterparts, but they typically rely on plug-in estimators that substitute estimated nuisance functions directly into total population size estimation \citep{huggins2007non}. This plug-in strategy, while intuitive, can suffer from slow rates of convergence and potentially invalid inference when combined with flexible machine learning estimation of nuisance capture probabilities \citep{vanderlaan2003unified, ehk_semiparametric}.

Closer to our work are the proposals by \cite{das2024doubly} and \cite{dulce2024population}, who leverage semiparametric efficiency theory to develop doubly robust machine learning estimators that approximately achieve the efficiency bound while accommodating flexible methods for nuisance function estimation. Specifically, \cite{das2024doubly} introduced doubly robust population size estimation in two-list capture-recapture settings, while \cite{dulce2024population} extended this framework to multiple lists under a log-linear model with no highest-order interaction. However, these methods assume complete covariate information for all captured individuals, an assumption rarely satisfied in practice when working with observational data on hard-to-reach populations where missing data is prevalent.

\subsection{Dealing with Missing Data}

Despite these methodological advances, missing covariate information remains an endemic challenge in capture-recapture studies, due to administrative challenges, mistrust in data collectors, or imperfect record linkage \citep{johnston2010sampling, ellard2015finding, binette2022almost}. Applied studies typically address this issue through a priori imputation methods (e.g., \citep{white2011multiple}), filling in missing values before conducting population size estimation \citep{patrick2003many, jamaluddine2025lancet}. However, these approaches can introduce substantial bias, particularly when missingness is related to both the covariates and the capture process \citep{zwane2007analysing, manrique2021capture}. More principled methods remain sparse and limited: \citet{manrique2019estimating} proposed a Bayesian framework for stratified capture-recapture with incomplete stratification in a single discrete covariate, but this approach does not extend readily to multiple or continuous covariates.

In this work we overcome these critical limitations by developing a nonparametric doubly robust estimation framework that directly accommodates missing covariate information with a Missing at Random assumption. Using semiparametric efficiency theory, we construct one-step machine learning estimators that combine efficiency, robustness, and finite-sample valid inference for any sample size. Unlike prior work, our framework handles multiple (potentially high-dimensional or continuous) covariates and provides formal finite-sample guarantees even when missingness rates are substantial.
\section{Preliminaries}
\label{sec:identification}

In the following we formally define the problem setup and notation for population size estimation of heterogeneous populations with missing covariates. We then establish an identification result for the capture probability and total population size under a standard no highest-order interaction in a log-linear model, combined with a Missing at Random assumption conditional on observables.

\subsection{Problem Formulation}

We are interested in estimating the size $n$ of a closed population using data from $K$ capture-recapture lists. We assume that each uniquely identified individual $i \in \{1, \dots,n \}$ in the target population is associated with four variables:

\begin{itemize}
    \item A binary \textit{capture profile} $Y_i = (Y_{i1}, \dots, Y_{iK}) \in \{0,1\}^K$, where $Y_{ik} = 1$ indicates presence on list $k$.
    
    \item A vector of \textit{always-observed} covariates $V_i \in \mathbb{R}^{d_V}$.
    
    \item A vector of \textit{potentially missing covariates} $X_i \in \mathbb{R}^{d_X}$.
    
    \item A binary \textit{missing indicator} $R_i \in \{0,1\}$, where $R_{i} = 1$ if all covariates $X_i$ were observed.
\end{itemize}

The observed data is fundamentally limited in two ways. First, we only have information for individuals captured on at least one list, meaning we observe only the $N$ units satisfying $Y_i \neq 0$ (where $0$ denotes the $K$-dimensional zero vector), with $N \leq n$. Second, even among captured individuals, the $X$ covariates may be unobserved  according to the missingness indicator $R_i$. Our observed dataset is therefore $\mathcal{D} = \{Z_i = (Y_i, V_i, R_iX_i, R_i) : Y_i \neq 0\}_{i = 1}^N$, where $R_i X_i$ equals the observed value for $X_i$ when $R_i = 1$ and is treated as missing when $R_i = 0$.

We assume that the covariate and conditional list membership distributions are identical across individuals and independent of other units. This implies that the random vectors $Z_i = (Y_i, V_i, R_iX_i, R_i)$ are \textit{iid} with respect to an unknown distribution $\mathbb{P}$. The number of captured individuals $N = \sum_{i=1}^n \mathbbm{1}(Y_i \neq 0)$ is thus a binomial random variable with $N \sim \text{Bin}(n, \psi)$, where $\psi = \mathbb{P}(Y \neq 0)$ represents the marginal \textit{capture probability} that an individual appears on at least one list. Since $\mathbb{E}[N] = n\psi$, the population size can be estimated via $\widehat{n} = N/\widehat{\psi}$, effectively reducing our problem to estimating the inverse capture probability $\psi^{-1}$.

Crucially, since individuals with $Y_i = 0$ are by definition not included in our sample, the observed capture-recapture data arise from the selection-biased distribution \citep{qin2017biased}
\begin{equation}
    \mathbb{Q}(Y, V, X, R) = \mathbb{P}(Y, V, X, R \mid Y \neq 0),
\end{equation}
governing the underlying joint distribution conditional on being observed. To connect this observable distribution $\mathbb{Q}$ with the target distribution $\mathbb{P}$, we define the following full conditional \textit{q-probabilities} for any capture profile $y \neq 0$:
\begin{equation}
    q_y(v, x) = \mathbb{Q}(Y = y \mid V = v, X = x),
\end{equation}
as the conditional distribution of capture profiles given covariates among the captured population. Additionally, let $\gamma(v, x) = \mathbb{P}(Y \neq 0 \mid V = v, X = x)$ denote the \textit{conditional capture probability}. These quantities are related for all $y \neq 0$ through \citep{das2024doubly}
\begin{equation}
\begin{split}
\label{eq:p_qgamma}
\mathbb{P}(Y = y \mid V = v, X = x) & =  \PP(Y=y \mid V=v, X=x, Y\neq 0)\PP(Y\neq 0 \mid V=v, X=x) \\
& = q_y(v, x)  \gamma(v, x).
\end{split}
\end{equation}

Furthermore, it is known that the inverse marginal capture probability $\psi^{-1}$ can be written as the harmonic mean of the conditional capture probability $\gamma(v,x)$:
\begin{equation}
\psi^{-1} = \mathbb{E}_{\mathbb{Q}}\left[\frac{1}{\gamma(V, X)}\right] = \int \frac{1}{\gamma(v, x)} \mathbb{Q}(V = v, X = x)dvdx,
\label{eq:harmonic_mean}
\end{equation}
where the expectation is taken with respect to the observable covariate distribution $\mathbb{Q}(V, X) = \mathbb{P}(V, X \mid Y \neq 0)$. This identity shows that if we can identify and estimate $\gamma(V, X)$ from the observed data, we can recover $\psi^{-1}$ and consequently estimate the total population size $n$. 

However, since we aim to estimate $\PP(Y \neq 0)$ by extrapolating  information from the observable distribution $\QQ$ to the target distribution $\PP$, it is necessary to introduce additional \textit{identification assumptions} to enable such extrapolation from the observed to the unobserved populations \citep{aleshin2024central}. Moreover, the presence of missing covariates $X$ further complicates both the identification and estimation of $\gamma(V, X)$ and the evaluation of the harmonic mean in equation \eqref{eq:harmonic_mean}.

\subsection{Identification of the Capture Probability}

To achieve identification of the inverse capture probability $\psi^{-1}$, we introduce two key assumptions that address the challenges of biased sampling (we only observe $\QQ$ rather than $\PP$) and missing covariate information. These identification assumptions are necessary so that the observable data distribution $\QQ$ is informative about the target estimand $\psi = \PP(Y \neq 0)$. 

Formally, consider a conditional log-linear model over the $K$ available lists, where the log-conditional probability for each capture profile $y=(y_1, \dots, y_K) \in \{0,1\}^{K}$ is modeled as: 
\begin{equation}
\label{eq:log_lineal}
    \log\left[\PP\left(Y = y \mid V = v, X=x\right)\right] = \alpha_0(v, x) + \sum_{y' \neq 0} \alpha_{y'}(v, x) \prod_{j:~ y'_j = 1}y_j,
\end{equation}
where the coefficients $\alpha_y(v,x)$ vary arbitrarily with respect to the covariates $(V=v, X=x)$ to accommodate heterogeneous capture probabilities across individuals and dependencies between lists  \citep{bishop2007discrete, fienberg1972multiple, dulce2024population}. 

Note that log-linear models are unidentifiable since there are $2^K$ parameters $\alpha_y$ to be estimated from only $2^K - 1$ observable capture profiles (all profiles but $y=0$). Therefore, the standard approach to achieve identification is to assume \textit{no highest-order interaction} (NHOI): $\alpha_{1}(v,x) = 0$ for all $(v,x)$, where 1 denotes the always-captured pattern $y = (1,\dots, 1)$. This assumption states that the probability of being observed by all lists, $\PP(Y = (1,\dots, 1) \mid V,X)$, can be expressed using only lower-order interactions between lists, and rules out a $K$-way interaction effect involving all capture events simultaneously. 

\begin{assumption}
\label{ass:nhoi}\textbf{(No Highest-Order Interaction - NHOI)}: 
The K-way interaction term in the conditional log-linear model \eqref{eq:log_lineal} satisfies $\alpha_{1}(v,x) = 0$ for all covariate values $(v,x)$.
\end{assumption}

The NHOI assumption is standard in the capture-recapture literature and equivalent to a (conditional) independence assumption in two-lists settings. In contrast, it is strictly weaker than an independence assumption between capture events when $K \geq 3$ lists are available. For a detailed discussion on the interpretation of the NHOI assumption for $K$ lists and comparison to alternative identification strategies, see \cite{dulce2024population}.

\Cref{ass:nhoi} yields the following closed-form expression for the inverse conditional capture probability \citep{you2021estimation}:
\begin{equation}
\label{eq:gamma_nhoi}
    \frac{1}{\gamma(v,x)} = 1 + \exp\left(\sum_{y \neq 0} (-1)^{|y| + 1}\log(q_y(v, x)) \right),
\end{equation}
where $|y| = \sum_{k=1}^{K} y_k$ denotes the number of lists present in capture profile $y = (y_1, \dots, y_K)$ (i.e., the $L^{1}-$norm of $y$). However, the full conditional $q$-probabilities $q_y(v, x)$ remain unidentified from the observed data since we do not always observe the covariates $X$. To address this identification challenge, we introduce a \textit{Missing at Random} assumption:

\begin{assumption}
\label{ass:mar}\textbf{(Missing at Random - MAR)}: The missingness mechanism $R$ is conditionally independent of the missing covariates $X$, given the always-observed covariates $V$ and capture profile $Y$:
$$X \indep R \mid V, Y.$$
\end{assumption}

Intuitively, \Cref{ass:mar} states that whether $X$ is observed or not does not depend on the unobserved values of $X$ itself, after conditioning on the observable information $(V, Y)$. This is a standard assumption in the missing data literature and is weaker than assuming missingness is completely at random \citep{little2019statistical}. Moreover, it is plausible in many capture-recapture settings where missingness patterns depend on observable characteristics and capture behavior but not on the specific unobserved covariate values.

Under the Missing at Random assumption, we can then identify the full conditional $q$-probabilities for all $y\neq 0$ as follows:
\begin{align*}
    \QQ(Y=y \mid V = v, X=x) & = \frac{\QQ(X=x \mid Y=y, V=v)\QQ(Y=y \mid V=v)}{\QQ(X=x \mid V=v)} \\
    & \overset{MAR}{=} \frac{\QQ(X=x \mid Y=y, V=v, R=1)\QQ(Y=y \mid V=v)}{\sum_{y\neq0}\QQ(X=x \mid Y=y, V=v, R=1)\QQ(Y=y \mid V=v)} \\
    & = \frac{\lambda_x(y,v)q_y(v)}{\sum_{y\neq 0}\lambda_x(y,v)q_y(v)},
\end{align*}
where $\lambda_x(y,v) = \QQ(X=x \mid Y=y, V=v, R=1)$ is the conditional covariate distribution of 
$X$ among fully-observed individuals with $R=1$, and $q_y(v) = \QQ(Y=y \mid V=v)$ are the $q$-probabilities depending only on the always-observed covariates $V$. Notably,  both $\lambda_x(y,v)$ and $q_y(v)$ are identifiable from the observed data: the former from individuals with $R=1$ and the latter from all captured individuals.

Substituting this expression for the full conditional $q$-probabilities into equations \eqref{eq:gamma_nhoi} and \eqref{eq:harmonic_mean} we obtain our main identification result. 

\begin{lemma}
\label{prop:identification} \textbf{(Identification Result)} 
Assume no highest-order interaction (\Cref{ass:nhoi}) in a conditional log-linear model with Missing at Random covariates (\Cref{ass:mar}). Under positivity conditions, the inverse capture probability $\psi^{-1}$ is identified by
\begin{align}
\label{eq:psi_identified}
    \frac{1}{\psi} & = \E_{\QQ}\left[\frac{1}{\gamma(V,X)}\right]\\ 
    & = \E_{\QQ}\left[1 + \exp\left(\sum_{y \neq 0} (-1)^{|y| + 1}\left\{\log(\lambda_X(y,V)) + \log(q_y(V))\right\} - \log\left(\sum_{y\neq 0}\lambda_X(y,V)q_y(V)\right)\right)\right], \nonumber
\end{align}
where $q_y(v) = \QQ(Y=y \mid V=v)$ and $\lambda_x(y,v) = \QQ(X=x \mid Y=y, V=v, R=1)$ are both identifiable from the observed data $\mathcal{D}$.
\end{lemma}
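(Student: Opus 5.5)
The proof chains three facts already in the paper: the harmonic-mean identity \eqref{eq:harmonic_mean}, the NHOI closed form \eqref{eq:gamma_nhoi}, and the MAR identification of the full conditional $q$-probabilities. I will first reverify each one briefly, since the lemma's proof should be self-contained modulo positivity.

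\emph{Step 1: harmonic mean.} By Bayes' rule, $\QQ(V=v,X=x) = \PP(V=v,X=x)\gamma(v,x)/\psi$. Hence
\[
\E_{\QQ}[1/\gamma(V,X)] = \psi^{-1}\int \PP(V=v,X=x)\,dv\,dx = \psi^{-1}.
\]
This requires $\gamma(v,x)>0$ wherever $\PP(V,X)$ has mass, which is the first positivity condition.

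\emph{Step 2: NHOI closed form.} Evaluate the log-linear model \eqref{eq:log_lineal} at every profile $y$ and take the alternating sum $\sum_{y}(-1)^{K-|y|}\log\PP(Y=y\mid v,x)$. By Möbius inversion on the subset lattice, this sum equals $\alpha_1(v,x)$, which is zero under \Cref{ass:nhoi}. Isolating the $y=0$ term and using \eqref{eq:p_qgamma}, namely $\PP(Y=y\mid v,x)=q_y(v,x)\gamma(v,x)$ for $y\neq 0$, the factors $\log\gamma$ cancel in the alternating sum over $y\neq0$. This is because $\sum_{y\neq0}(-1)^{|y|+1}=1$, and that term balances against $\PP(Y=0\mid v,x)=1-\gamma$. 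Solving gives
\[
\frac{1-\gamma}{\gamma} = \exp\Big(\sum_{y\neq0}(-1)^{|y|+1}\log q_y(v,x)\Big),
\]
which is \eqref{eq:gamma_nhoi}. I expect this step to be the main technical point. The sign bookkeeping depends on the parity of $K$, and I will check it explicitly by writing $(-1)^{K-|y|}=(-1)^K(-1)^{|y|}$ and dividing through. Positivity of all $q_y(v,x)$ is needed for the logarithms to be defined.

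\emph{Step 3: MAR substitution.} Apply Bayes within $\QQ$ conditional on $V=v$. Then use \Cref{ass:mar}, which gives $\QQ(X\mid Y,V)=\QQ(X\mid Y,V,R=1)$. Since conditioning on $Y\neq0$ does not affect conditional independence statements among variables measured on the selected population, the MAR statement under $\QQ$ is the relevant one. This yields
\[
q_y(v,x)=\frac{\lambda_x(y,v)\,q_y(v)}{\sum_{y'\neq0}\lambda_x(y',v)\,q_{y'}(v)}.
\]
This step requires $\QQ(R=1\mid Y,V)>0$, so that $\lambda_x$ is defined, and a positive denominator.

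\emph{Step 4: substitute and simplify.} Taking logs, each $\log q_y(v,x)$ equals $\log\lambda_x(y,v)+\log q_y(v)$ minus the common term $\log\sum_{y'}\lambda_x(y',v)q_{y'}(v)$. The common term is multiplied by $\sum_{y\neq0}(-1)^{|y|+1}=1-(1-1)^K=1$, so it appears exactly once with a minus sign. Plugging the result into Step 2 and then Step 1 gives \eqref{eq:psi_identified}.

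Finally, $q_y(v)$ is a functional of the observed law of $(Y,V)$ among all captured units, and $\lambda_x(y,v)$ is a functional of the complete cases. The outer expectation over $\QQ(V,X)$ is also identified, since $\QQ(V,X)=\int\QQ(X\mid V,Y,R=1)\,d\QQ(V,Y)$ by MAR. Note that the outer expectation's dependence on the unobserved $X$ must be handled through this identification of $\QQ(V,X)$, not assumed away.
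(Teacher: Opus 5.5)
Your proposal is correct and follows the paper's route. The paper likewise obtains the lemma by substituting the MAR Bayes-rule expression $q_y(v,x)=\lambda_x(y,v)q_y(v)/\sum_{y'\neq 0}\lambda_x(y',v)q_{y'}(v)$ into the NHOI closed form \eqref{eq:gamma_nhoi} and the harmonic-mean identity \eqref{eq:harmonic_mean}; it takes those two identities as known from the literature, whereas you re-derive them via Bayes and M\"obius inversion. Your explicit checks that the common log-normalizer carries coefficient $\sum_{y\neq0}(-1)^{|y|+1}=1$, and that the outer law $\QQ(V,X)$ is itself identified under MAR, fill in details the paper leaves implicit. One correction to your justification for applying MAR under $\QQ$: it transfers because the selection event $\{Y\neq 0\}$ is a function of the conditioning variable $Y$, not because selection never affects conditional independence.
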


\Cref{prop:identification} establishes that the population size is identifiable under our two key assumptions: MAR to handle missing covariates and NHOI to extrapolate to the unobserved population. Crucially, our identification expression \eqref{eq:psi_identified} depends only on the \textit{nuisance functions} $q_y(v)$ and $\lambda_x(y,v)$, which can be estimated directly from the observed data $\mathcal{D}$ using any estimation procedure. In the following section, we develop flexible yet robust nonparametric estimators for the inverse capture probability $\widehat{\psi}^{-1}$, and the corresponding total population size $\widehat{n} = N/\widehat{\psi}$.

\begin{remark}
    Our identification strategy leverages a conditional log-linear model and NHOI to assume certain independence structure among capture events in the population distribution $\PP$. Importantly, these assumptions impose no restrictions on the \textit{observable} distribution $\QQ$, preserving the nonparametric nature of our model with respect to the nuisance functions that identify $\psi$. This approach enables us to develop efficient nonparametric estimators while avoiding additional restrictive parametric assumptions \citep{dulce2024population}.
\end{remark}

\subsection{Notation}
We summarize the main nuisance functions used to identify the capture probability $\psi$:
\begin{align*}
    \gamma(v,x) & = \PP(Y\neq 0 \mid V=v, X=x) && \text{(conditional capture probability)} \\
    q_y(v, x) & = \mathbb{Q}(Y=y \mid V=v, X=x) && \text{(full conditional $q$-probabilities)} \\
    q_y(v) & = \mathbb{Q}(Y=y \mid V=v) && \text{($V$-only conditional $q$-probabilities)} \\
    \lambda_x(y,v) & = \mathbb{Q}(X=x \mid Y=y, V=v, R=1) && \text{(covariate distribution among $R=1$)} \\
    \pi(y, v) & = \mathbb{Q}(R=1 \mid Y=y, V=v) && \text{(missingness propensity score)} 
    %q(x \mid v) & = \QQ(X=x \mid V=v) = \sum_{y\neq 0} \lambda_x(y,v)q_y(v) && \text{(conditional covariate distribution)} \\
\end{align*}

\begin{remark}
    We assume positivity conditions for the nuisance functions $q_y(v) > 0$, $\lambda_x(y,v) > 0$, and $\pi(y,v) > 0$, with probability one, for all combinations in the support, ensuring well-defined inverse probability weights and identifiability.
\end{remark}

\section{Nonparametric Estimation}
\label{sec:estimation}

Having established identification of the inverse capture probability $\psi^{-1}$ under the NHOI and MAR assumptions, we now develop estimation procedures for the target parameter. Our approach leverages semiparametric efficiency theory to construct machine learning (ML) based estimators that approximately achieve the efficiency bound while being robust to misspecification of either the missingness mechanism or the heterogeneous $q$-probabilities. This ``doubly robust'' property manifests through a second-order error term depending on products of nuisance function estimation errors. We begin by deriving the efficient influence function (EIF) for $\psi^{-1}$, which characterizes the optimal asymptotic MSE and guides our estimator construction. We then propose two estimation strategies: a plug-in estimator and a doubly robust one-step estimator that incorporates bias correction through the EIF. Finally, we develop the corresponding estimators for the population size $n$ and derive approximately valid confidence intervals with finite-sample guarantees. Throughout, we maintain our nonparametric model allowing for flexible estimation of the nuisance functions using modern machine learning methods. 

\subsection{Influence Function and Efficiency Bound}

We start by deriving the Efficient Influence Function (EIF) for the inverse capture probability $\psi^{-1}$ from the identification expression in \Cref{prop:identification}. The EIF is critical in our setting as it: (i) characterizes the lowest achievable tsquared error (MSE) in our nonparametric model, (ii) provides a recipe to construct efficient estimators, and (iii) clarifies the regularity conditions needed for efficiency \citep{bickel1993efficient, semiparametric, ehk_semiparametric}. These properties are essential when the true functional forms of the nuisance functions are unknown, enabling data-adaptive estimation of heterogeneous capture probabilities and missingness patterns while maintaining robustness and statistical efficiency. 

The following lemma establishes the EIF for our parameter of interest $\psi^{-1}$ in an unrestricted nonparametric model. All proofs are presented in the Appendix.

\begin{lemma}
\label{lemma:if_psi}
Assume (i) no highest-order interaction in a conditional log-linear model with covariates $(V,X)$ (\Cref{ass:nhoi}), (ii) missing at random (\Cref{ass:mar}), and (iii) positivity. Then, the Efficient Influence Function for $\psi^{-1} = \mathbb{E}_{\mathbb{Q}}[\gamma(V, X)^{-1}]$ is given by:
\begin{align}
\label{eq:if_psi}
    \phi(Z) =  \sum_{y\neq 0} (-1)^{1+|y|} & \left\{ \frac{R\mathbbm{1}(Y=y)}{\pi(y,v)}\left(\frac{1}{q_y(v,x)}\left(\frac{1}{\gamma(v,x)}-1\right) - \frac{1}{q_y(v)}\mathbb{E}\left[\frac{1}{\gamma(V,X)} - 1 ~\Bigg|~ V\right]\right) \right. \nonumber \\ 
    & \left. +\left(\frac{\mathbbm{1}(Y=y)}{q_y(v)}\right)\mathbb{E}\left[\frac{1}{\gamma(V, X)} - 1 ~\Bigg|~ V\right]  \right\}  - \left(\frac{1}{\psi} - 1\right),
\end{align}
where the nuisance functions $q_y(v), q_y(v,x), \pi(y,v)$ and $\gamma(x,v)$ are defined in the notation subsection above.
\end{lemma}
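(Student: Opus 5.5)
The plan is to compute the EIF of $\psi^{-1}$ in the nonparametric model for the observed law $\QQ$ of $Z=(Y,V,RX,R)$. I would treat \eqref{eq:psi_identified} as a smooth functional of the observed-data nuisances and differentiate along a regular parametric submodel $\QQ_\epsilon$ with score $S(Z)$, using the chain rule. The functional depends only on the observed law, and the model for $\QQ$ is unrestricted (see the Remark after \Cref{prop:identification}), so the model tangent space is saturated. Hence the unique influence function is the efficient one.

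It is convenient to factor the law as $\QQ(V)\,\QQ(Y\mid V)\,\QQ(R\mid Y,V)\,\QQ(X\mid Y,V,R=1)$. I also write $h(v,x)=1/\gamma(v,x)-1=\exp\bigl(\sum_{y\neq0}(-1)^{|y|+1}\log q_y(v,x)\bigr)$ and
\[
\psi^{-1}-1=\E_{\QQ}\Bigl[\textstyle\sum_{y\neq0}q_y(V)\int h(V,x)\lambda_x(y,V)\,dx\Bigr],
\]
since $\QQ(X=x\mid V)=\sum_y\lambda_x(y,V)q_y(V)$ by MAR. The pathwise derivative then splits into four pieces.

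\textbf{(a) Derivative in $\QQ(V)$.} This gives $\E[h(V,X)\mid V]-(\psi^{-1}-1)$.

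\textbf{(b) Derivative in $q_y(v)$.} This piece enters in two places: explicitly through the outer mixture weight, and inside $h$ through $q_y(v,x)=\lambda_x(y,v)q_y(v)/\sum_{y'}\lambda_x(y',v)q_y(v)$.

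\textbf{(c) Derivative in $\lambda_x(y,v)$.} This also enters both the mixture and $h$, and its score is supported on $R=1$.

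\textbf{(d) Derivative in $\pi$.} This is zero, since $\pi$ does not appear in the functional.

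For (b) and (c), the key computation is the derivative of $h$. We have
\[
\partial h=h\sum_{y}(-1)^{|y|+1}\,\partial q_y(v,x)/q_y(v,x).
\]
Also $\partial\log q_y(v,x)=\partial\log\lambda_x(y,v)+\partial\log q_y(v)-\partial\log\QQ(x\mid v)$, and the subtracted term $\partial\log\QQ(x\mid v)$ is common to all $y$.

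I would then use the standard influence-function representations of each score piece. For $q_y(v)$ it is $\{\mathbbm 1(Y=y)-q_y(V)\}/\QQ(V)$. For the conditional density $\lambda_x(y,v)$ it is $R\mathbbm 1(Y=y)\{\mathbbm 1(X=x)-\lambda_x(y,V)\}/\{\pi(y,V)q_y(V)\QQ(V)\}$; the factor $\pi$ arises from conditioning on $R=1$. Integrating these against the corresponding weights, the $\lambda$-piece produces the term
\[
\frac{R\mathbbm 1(Y=y)}{\pi(y,V)}\left(\frac{h(V,X)}{q_y(V,X)}-\frac{\E[h\mid V]}{q_y(V)}\right).
\]
Here the $1/q_y(V,X)$ comes from $\lambda_X(y,V)q_y(V)/\QQ(X\mid V)=q_y(V,X)$ appearing in the denominator after cancelling weights. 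The $q_y(v)$-piece produces $\mathbbm 1(Y=y)\E[h\mid V]/q_y(V)$ plus centering terms.

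\textbf{Main obstacle.} I expect the hardest part to be the bookkeeping showing that the contributions of the mixture weights and of the common term $\partial\log\QQ(x\mid v)$ cancel. For the common term, this requires the identity $\sum_{y\neq0}(-1)^{|y|+1}=1$ for $K$ lists. Combined with the mixture-weight derivative, it should exactly offset the explicit outer-mixture terms and leave only the alternating-sign terms displayed in \eqref{eq:if_psi}.

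I would verify this first for a discrete $(V,X)$ with Gateaux derivatives toward point masses, where the algebra is explicit, and then pass to general distributions by the usual argument. Finally, I would collect all constants and check that $\phi$ has mean zero under $\QQ$; this confirms the $-(\psi^{-1}-1)$ centering and that the $V$-marginal term from (a) has been absorbed.
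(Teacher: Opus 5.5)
Your proposal is correct, and it reaches \eqref{eq:if_psi} by a different route from the paper.

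The paper does not derive $\phi$. It takes the displayed $\phi$ as a candidate and computes the von Mises remainder $R_2(\QQ,\bar\QQ)=\E_{\QQ}[\phi(Z;\bar\QQ)]+\bar\psi^{-1}-\psi^{-1}$. A second-order Taylor expansion of $\exp\{\sum_{y\neq0}(-1)^{|y|+1}\log q_y(V,X)\}$ around $\bar q$ shows this remainder consists only of products of nuisance errors. The paper then invokes Lemma 2 of \citet{kennedy2023density}: in a nonparametric model, such a $\phi$ is the EIF.

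You instead differentiate along submodels of the factorized observed-data law and read off the influence function piece by piece. Your bookkeeping checks out. The cancellation you flag as the main obstacle is actually immediate. Since $\sum_{y\neq0}(-1)^{|y|+1}=1$, one has $\QQ(x\mid v)\,h(v,x)=\prod_{y\neq0}\{\lambda_x(y,v)q_y(v)\}^{(-1)^{|y|+1}}$. Thus $\psi^{-1}-1$ is the integral of this product against $dx\,\QQ(dv)$, and $\QQ(x\mid v)$ drops out entirely. Pieces (a), (b), (c) then give, respectively:
\begin{itemize}
\item $\E[h\mid V]-(\psi^{-1}-1)$;
\item $\sum_{y\neq0}(-1)^{1+|y|}\mathbbm{1}(Y=y)\E[h\mid V]/q_y(V)-\E[h\mid V]$;
\item the $R$-weighted term, with its sign $(-1)^{1+|y|}$.
\end{itemize}
These sum exactly to \eqref{eq:if_psi}.

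What your route buys is a constructive derivation that needs no guessed $\phi$. It also makes transparent why $\pi$ appears at all: only through the $R=1$ conditioning in the score of $\lambda$.

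What the paper's route buys is the explicit remainder $\widehat R_2$, which is reused directly in \Cref{thm:optimal} for the finite-sample bound and the doubly robust structure. With your approach that remainder would still have to be computed separately.

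Incidentally, in that computation the paper replaces $\bar{\mathbb{E}}[\bar\gamma^{-1}-1\mid V]$ by $\bar\gamma(V,X)^{-1}-1$ inside $\E_{\QQ}$. This is valid only when $\bar\QQ(x\mid v)=\QQ(x\mid v)$. The dropped term is a product of the $\pi$-error and the $\QQ(x\mid v)$-error, so it is itself second order and the lemma's conclusion is unaffected. Your direct derivation avoids this step altogether.
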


The nonparametric efficiency bound can then be computed as the variance of the EIF $\phi(Z)$ for $\psi^{-1}$ from \Cref{lemma:if_psi}. This bound gives the best possible asymptotic MSE achievable by \textit{any} estimator for the capture probability in a local minimax sense \citep{semiparametric}. That is, once an estimator is shown to achieve this bound no further improvement is possible in large samples, without imposing additional parametric restrictions on the nuisance functions. However, asymptotic results may be insufficient in capture-recapture settings with moderate sample sizes, so we show in the next section that our proposed one-step ML estimator based on the EIF also enjoys approximately optimal efficiency guarantees for any sample size.

%\begin{theorem}
%\label{thm:eff_bound}
%Under the assumptions of Lemma \ref{lemma:if_psi}, the nonparametric efficiency bound for $\psi^{-1} = \mathbb{E}_{\mathbb{Q}}[\gamma(V,X)^{-1}]$ is:
%\begin{equation}
%\label{eq:eff_bound}
%    \text{Var}(\phi(Z)) = 
%\end{equation}
%\end{theorem}

\subsection{Estimation Strategies}

In the following we propose two estimation approaches for the inverse capture probability $\psi^{-1}$. First, we present a baseline plug-in estimator that, while intuitive, typically exhibits first-order bias when combined with flexible nuisance function estimation. To address this limitation, our primary contribution is a doubly robust ML estimator that approximately achieves the efficiency bound and is approximately normally distributed for any finite sample size.

\subsubsection{Plug-in Estimator}

A natural baseline approach, commonly known as a \textit{plug-in estimator}, directly substitutes estimated nuisance functions into the identified expression for $\psi^{-1}$ from \Cref{prop:identification} and approximates the expectation with an empirical average \citep{fienberg1972multiple, dulce2024population}. However, since $X$ is not always observed, we cannot directly compute the empirical average over $(V,X)$ pairs. Instead, we use the law of iterated expectations to write 
$$\frac{1}{\psi} = \E_{\QQ(V,X)}\left[\frac{1}{\gamma(V,X)}\right] = \E_{\QQ(V)}\left[\E_{\QQ(X | V)}\left[\frac{1}{\gamma(V,X)} ~\Bigg|~ V \right]\right].$$

The plug-in estimator $\widehat{\psi}^{-1}_{\text{pi}}$ is thus constructed as follows: (i) estimate nuisance functions $\widehat{\lambda}_x(y,v)$ and $\widehat{q}_y(v)$ using any parametric or nonparametric ML method; (ii) combine these to construct conditional capture probabilities $\widehat{\gamma}(v,x)$ for each unit as specified in \Cref{prop:identification}; (iii) for each observed $v_i$, estimate the inner conditional expectation by integrating $\widehat{\gamma}^{-1}(v_i, x)$ against the estimated conditional covariate distribution $\widehat{q}(x | v_i)$, and (iv) compute the outer expectation using the empirical average over the observed units:
\begin{equation}
\label{eq:plugin_psi}
\widehat{\frac{1}{\psi_{\text{pi}}}} = \frac{1}{N}\sum_{i=1}^N \sum_{x} \frac{1}{\widehat{\gamma}(v_i, x)} \widehat{q}(x | v_i),
\end{equation}
where $\widehat{q}(x | v) = \sum_{y\neq 0} \widehat{\lambda}_x(y,v)\widehat{q}_y(v)$ under \Cref{ass:mar}.

%The plug-in estimator is computationally straightforward and only requires estimation of the nuisance functions $q_y(v)$ and $\lambda_x(y,v)$ to construct $\gamma(v,x)$ and take its harmonic mean. 
While this plug-in strategy is conceptually straightforward, it faces two critical limitations when nuisance functions are estimated using flexible machine learning methods. First, using these data-adaptive estimators can lead to slower than $\sqrt{N}$ convergence, resulting in statistical inefficiency. Second, the limiting distribution becomes unknown, prohibiting standard inference \citep{chernozhukov2018double, vanderlaan2003unified, ehk_semiparametric}. These deficiencies motivate the development of our bias-corrected one-step estimator.

\subsubsection{One-step ML Estimator}

We now leverage semiparametric efficiency theory to overcome the limitations of the plug-in estimator and achieve both efficiency and robustness in our nonparametric model. The key insight is to construct a one-step ML estimator by augmenting the plug-in estimator with a bias-correction term derived from the EIF in \Cref{lemma:if_psi}. This debiasing approach, well-established in the semiparametric literature \citep{semiparametric, bickel1993efficient}, eliminates first-order bias and yields an estimator that can achieve the efficiency bound even when nuisance functions are estimated using flexible machine learning methods. 

The one-step ML estimator for the inverse capture probability $\widehat{\psi}^{-1}_{os}$ takes the form:
$$\widehat{\frac{1}{\psi_{\text{os}}}} = \widehat{\frac{1}{\psi_{\text{pi}}}} + \frac{1}{N}\sum_{i=1}^N \widehat{\phi}(Z_i),$$
where $\widehat{\phi}(Z_i)$ is the estimated EIF from \eqref{eq:if_psi} with nuisance functions replaced by their estimates. Note that, analogous to the plug-in estimator, any parametric or nonparametric method can be employed to estimate nuisance functions $\widehat{\lambda}_x(y,v)$ and $\widehat{q}_y(v)$. Expanding this expression yields:
\begin{align}
\label{eq:dr_psi}
     \widehat{\frac{1}{\psi_{os}}} =  & \frac{1}{N}\sum_{i=1}^N  \sum_{y\neq 0} (-1)^{1+|y|}\left\{ \left(\frac{\mathbbm{1}(Y_i=y)}{\widehat{q}_y(v_i)}\right)\widehat{\mathbb{E}}\left[\frac{1}{\widehat{\gamma}(V, X)} - 1 ~\Bigg|~ V=v_i\right]  \right. \\ \nonumber
    & \left. + \frac{\mathbbm{1}(Y_i=y)R_i}{\widehat{\pi}(y,v_i)}\left(\frac{1}{\widehat{q}_y(v_i, x_i)}\left(\frac{1}{\widehat{\gamma}(v_i, x_i)}-1\right) - \frac{1}{\widehat{q}_y(v_i)}\widehat{\mathbb{E}}\left[\frac{1}{\widehat{\gamma}(V,X)} - 1 ~\Bigg|~ V=v_i\right]\right) \right \} + 1,
\end{align}
where $\widehat{q}_y(v, x) = \widehat{\lambda}_x(y,v)\widehat{q}_y(v)/\sum_{y'\neq 0}\widehat{\lambda}_x(y',v)\widehat{q}_{y'}(v)$ is the plug-in estimator under MAR for the full conditional $q$-probabilities $q_y(v,x)$. 

Intuitively, the first term in our one-step ML estimator provides an estimate of $\psi^{-1}$ using only the always-observed covariates $V$ to model heterogeneous capture probabilities $q_y(v)$. The second term corrects for this approximation by incorporating information from the potentially missing covariates $X$ through (augmented) inverse probability weighting. Specifically, observations where $X$ is observed ($R_i=1$) are reweighted by $\widehat{\pi}(y_i,v_i)^{-1}$ to represent the full covariate distribution, accounting for residuals between using the full conditional capture probabilities $\widehat{q}_y(v,x)$ and $\widehat{\gamma}(v,x)^{-1}$ versus their $V$-only counterparts $\widehat{q}_y(v)$ and $\widehat{\E}[\widehat{\gamma}(V,X)^{-1}|V]$.

We now establish key properties of the one-step ML estimator that hold for any finite sample size $N$, even when nuisance functions are estimated using flexible machine learning methods. Specifically, we demonstrate that the estimator is (i) nearly optimal in the sense that it closely approximates the empirical average of the EIF, (ii) ``doubly robust'' with respect to nuisance function estimation, and (iii) approximately normally distributed enabling valid inference. These finite-sample guarantees are particularly important in capture-recapture applications, where the sample size $N$ is bounded by the fixed population size $n \geq N$
rather than growing to infinity. 

\begin{theorem} 
\label{thm:optimal} 
For any sample size $N$, the proposed one-step ML estimator $\widehat{\psi}^{-1}_{os}$ approximates the empirical average of the EIF, $\QQ_N(\phi)$, centered at the true inverse capture probability. Specifically, for any predetermined error tolerance $\eta > 0$:
\begin{equation}
    \PP\left(\Big|(\widehat{\psi}^{-1}_{os} - \psi^{-1}) - \QQ_N(\phi)\Big| \leq \eta\right) \geq 1 - \left(\frac{1}{\eta^2}\right) \E_{\QQ}\left[\widehat{R}_2^2 + \frac{1}{N}\norm{\widehat{\phi} - \phi}^2\right],
\end{equation}
where $\phi$ is the EIF from \eqref{eq:if_psi} and $\widehat{R}_2$ is the reminder term in the von Misses expansion given by
\begin{align}
    \widehat{R}_2 = \E_{\QQ}&\left[  \left(\frac{1}{\widehat{\gamma}(V,X)}-1\right)\sum_{y\neq 0} (-1)^{1+|y|}\left(\frac{\pi(y,V) - \widehat{\pi}(y,V)}{\widehat{\pi}(y, V)}\right)\left(\frac{q_y(V,X) - \widehat{q}_y(V,X)}{\widehat{q}_y(V,X)} - \frac{q_y(V) - \widehat{q}_y(V)}{\widehat{q}_y(V)}\right) \nonumber \right.\\
    & + \left(\frac{1}{\widetilde{\gamma}(V,X)}-1\right)\sum_{y_i\neq y_j \neq 0} (-1)^{|y_i| + |y_j|}\frac{(q_{y_i}(V,X) - \widehat{q}_{y_i}(V,X))}{\widetilde{q}_{y_i}(V,X)}\frac{(q_{y_j}(V,X) - \widehat{q}_{y_j}(V,X))}{\widetilde{q}_{y_j}(V,X)} \nonumber \\ 
    & + \left. \left(\frac{1}{\widetilde{\gamma}(V,X)}-1\right)\sum_{|y|\neq 0, ~even} \left(\frac{q_{y}(V,X) - \widehat{q}_{y}(V,X)}{\widetilde{q}_{y}(V,X)}\right)^2\right],
    \label{eq:r2_bound}
\end{align}
for some intermediate $\widetilde{q}(V,X)$ between $q(V,X)$ and $\widehat{q}(V,X)$.
\end{theorem}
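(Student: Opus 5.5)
The plan is the standard von Mises expansion argument for one-step estimators, with sample splitting: nuisance estimates $\widehat{\lambda},\widehat{q}_y,\widehat{\pi}$ are built on an independent sample, so $\widehat{\phi}$ is a fixed function conditional on the training data. Write $\QQ_N$ for the empirical measure on the estimation sample and $\widehat{\psi}^{-1}$ for the plug-in functional evaluated at $\widehat{\QQ}$. Then $\widehat{\psi}^{-1}_{os}=\widehat{\psi}^{-1}+\QQ_N(\widehat{\phi})$. Assuming $\widehat{\phi}$ is centered under the estimated distribution, we have the algebraic identity
\begin{equation*}
(\widehat{\psi}^{-1}_{os}-\psi^{-1}) - \QQ_N(\phi) = (\QQ_N-\QQ)(\widehat{\phi}-\phi) + \widehat{R}_2,
\end{equation*}
where $\widehat{R}_2 := \widehat{\psi}^{-1}-\psi^{-1} + \E_{\QQ}[\widehat{\phi}(Z)]$ is the von Mises remainder. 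If the plug-in average in \eqref{eq:plugin_psi} uses $\QQ_N(V)$ rather than $\widehat{\QQ}(V)$, the extra $V$-marginal term is absorbed into the empirical-process term.

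Given this decomposition, the probability bound is immediate. Apply Chebyshev/Markov to the square: $\PP(|A+B|>\eta)\le \eta^{-2}\E[(A+B)^2]$. Conditional on the training sample, $A=(\QQ_N-\QQ)(\widehat{\phi}-\phi)$ has mean zero and variance at most $\norm{\widehat{\phi}-\phi}^2/N$, and $B=\widehat{R}_2$ is fixed. Hence the cross term vanishes and $\E[(A+B)^2\mid \text{train}] \le \widehat{R}_2^2 + N^{-1}\norm{\widehat{\phi}-\phi}^2$. Iterating expectations then gives the stated inequality.

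The main work, and the main obstacle, is computing $\widehat{R}_2$ in the displayed second-order form \eqref{eq:r2_bound}. I would proceed in four steps.

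\textbf{Step 1: the $V$-level and correction terms.} Take $\E_{\QQ}[\widehat{\phi}]$, first conditioning on $(Y,V)$. Using MAR (\Cref{ass:mar}), $\E[R\mathbbm{1}(Y=y)f(V,X)\mid V]=\pi(y,V)q_y(V)\E[f\mid Y=y,V]$. The $V$-level terms $\mathbbm{1}(Y=y)/\widehat{q}_y(V)$ and the correction terms then combine into factors $\pi/\widehat{\pi}$ and $q_y/\widehat{q}_y$.

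\textbf{Step 2: subtract and add $\pi=\widehat{\pi}$.} Subtracting and adding the expression evaluated with $\pi=\widehat{\pi}$ produces the first line of \eqref{eq:r2_bound}, a product of the $(\pi-\widehat{\pi})/\widehat{\pi}$ error with the difference of relative errors in $q_y(V,X)$ and $q_y(V)$. The $q_y(V)$ errors cancel at first order.

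\textbf{Step 3: expand $\log$-ratios in $1/\gamma$.} What remains is $\E_{\QQ}[1/\gamma - 1/\widehat{\gamma} + (1/\widehat{\gamma}-1)\sum_y(-1)^{1+|y|}(q_y-\widehat{q}_y)/\widehat{q}_y]$ at the $(V,X)$ level. Here \eqref{eq:gamma_nhoi} gives $1/\gamma-1=\exp(\sum_y(-1)^{|y|+1}\log q_y)$, so this is a first-order Taylor remainder of $g(q)=\exp(\sum_y s_y\log q_y)$ with $s_y=\pm1$.

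\textbf{Step 4: the second-order expansion.} A second-order mean-value expansion around $\widehat{q}$ yields Hessian entries $g\,s_{y_i}s_{y_j}/(\widetilde{q}_{y_i}\widetilde{q}_{y_j})$ off-diagonal and $g(s_y^2-s_y)/\widetilde{q}_y^2$ on the diagonal. Since $s_y^2-s_y$ is $0$ when $|y|$ is odd and $2$ when $|y|$ is even, only even-$|y|$ diagonal terms survive, matching the last two lines with intermediate $\widetilde{q}$. Constants of $1/2$ and $2$ are absorbed by counting ordered pairs $y_i\neq y_j$.

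I expect the bookkeeping in Step 1 to be the delicate part. Verifying exactly that the $\widehat{\E}[\,\cdot\mid V]$ terms cancel against each other, and that the plug-in's use of $\widehat{q}(x\mid v)$ is consistent with $\widehat{q}_y(v,x)$ under MAR, is what makes the first line a product rather than a first-order term.
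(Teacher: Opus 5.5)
Your overall route is the paper's: Markov's inequality on the squared deviation, the identity $(\widehat{\psi}^{-1}_{os}-\psi^{-1})-\QQ_N(\phi)=\widehat{R}_2+(\QQ_N-\QQ)(\widehat{\phi}-\phi)$, and a remainder computed by iterated expectations under MAR plus a second-order expansion of $\exp(\sum_y s_y\log q_y)$, where $s_y=(-1)^{1+|y|}$. Your argument that the cross term vanishes conditional on the training fold is exactly what the paper's last inequality needs but never states. The identity needs no centering assumption, since the plug-in value cancels exactly in \eqref{eq:dr_psi}.

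The real problem is at the step you flagged: the $\widehat{\E}[\,\cdot\mid V]$ terms do not cancel. Write $\widehat{g}=1/\widehat{\gamma}-1$, $q(x\mid v)=\QQ(X=x\mid V=v)$, $\widehat{m}(v)=\widehat{\E}[\widehat{g}\mid V=v]=\sum_x\widehat{g}(v,x)\widehat{q}(x\mid v)$, and $\widehat{\delta}_y=(\pi(y,V)-\widehat{\pi}(y,V))/\widehat{\pi}(y,V)$. Carried out carefully, Steps 1--2 give the propensity part
\[
\E_{\QQ}\Big[\sum_{y\neq 0}s_y\,\widehat{\delta}_y\Big(\frac{q_y(V,X)}{\widehat{q}_y(V,X)}\,\widehat{g}(V,X)-\frac{q_y(V)}{\widehat{q}_y(V)}\,\widehat{m}(V)\Big)\Big].
\]
The first line of \eqref{eq:r2_bound}, by contrast, has $\widehat{g}(V,X)$ in place of $\widehat{m}(V)$. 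The two differ by
\[
\E_{\QQ}\Big[\sum_{y\neq 0}s_y\,\widehat{\delta}_y\,\frac{q_y(V)}{\widehat{q}_y(V)}\sum_x\widehat{g}(V,x)\big\{q(x\mid V)-\widehat{q}(x\mid V)\big\}\Big].
\]
This is not zero in general. The reason is that $\widehat{m}$ averages over $\widehat{q}(x\mid v)$, while the outer expectation uses the true $q(x\mid v)$. No MAR-consistency between $\widehat{q}_y(v,x)$ and $\widehat{q}(x\mid v)$ changes that. The paper's proof drops the same term: it silently replaces $\bar{\E}[1/\bar{\gamma}-1\mid V]$ by $1/\bar{\gamma}(V,X)-1$ inside $\E_{\QQ}$.

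Using $q(x\mid v)\,q_y(v,x)=q_y(v)\lambda_x(y,v)$ and the definition of $\widehat{q}_y(v,x)$, the correct propensity part is
\[
\E_{\QQ}\Big[\sum_{y\neq 0}s_y\,\widehat{\delta}_y\,\frac{q_y(V)}{\widehat{q}_y(V)}\sum_x\widehat{g}(V,x)\,\widehat{q}(x\mid V)\,\frac{\lambda_x(y,V)-\widehat{\lambda}_x(y,V)}{\widehat{\lambda}_x(y,V)}\Big].
\]
This is a product of propensity errors and $\lambda$ errors. So the probability bound and the doubly robust conclusion go through with this as the first line. However, \eqref{eq:r2_bound} as displayed is not the exact remainder, and your Step 2, like the paper's, cannot establish it.

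Two smaller points. First, in Step 3 the sign is reversed. What remains is $\E_{\QQ}[1/\widehat{\gamma}-1/\gamma+(1/\widehat{\gamma}-1)\sum_y s_y(q_y-\widehat{q}_y)/\widehat{q}_y]$; your version is first order. This correct expression equals $-\tfrac12\Delta^{\top}H(\widetilde{q})\Delta$, where $\Delta=q-\widehat{q}$ and $H$ is the Hessian of $q\mapsto\exp(\sum_y s_y\log q_y)$. So the last two lines of \eqref{eq:r2_bound} should carry a minus sign, which is again a discrepancy shared with the paper. Second, the Taylor factor $\tfrac12$ cancels the diagonal factor $2$, but on the cross terms it is absorbed only if $\sum_{y_i\neq y_j}$ runs over unordered pairs, not ordered ones.
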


Crucially, the remainder term $\widehat{R}_2$ is of second order and exhibits a \textit{doubly robust} structure through products of nuisance function estimation errors. The first component involves products of errors in the missingness propensity score $\pi(y,v)$ with differences between $V$-only and full conditional $q$-probabilities. The second and third components capture cross-products of errors in $q_y(v,x)$ across different capture profiles, resembling the second order reminder for the one-step estimator in log-linear models with complete covariate information derived in \cite{dulce2024population}. These terms depend on the estimation error of the full conditional capture probabilities via $\widehat{q}_y(v, x) = \widehat{\lambda}_x(y,v)\widehat{q}_y(v)/\sum_{y'\neq 0}\widehat{\lambda}_x(y',v)\widehat{q}_{y'}(v)$, which itself depends on estimating the nuisance functions $q_y(v)$ and $\lambda_x(y,v)$ accurately. Notably, while estimating the conditional density $\lambda_x(y,v)$ can be challenging for continuous or high-dimensional $X$ \citep{kennedy2023density}, estimation errors in $\widehat{\lambda}_x(y,v)$ affect $\widehat{R}_2$ only through their impact on $\widehat{q}_y(v,x)$. This averaging can provide some robustness, as localized estimation errors in the density may partially cancel when aggregated into the capture probability estimates. Finally, the error term in estimating the EIF, $\frac{1}{N}||\widehat{\phi} - \phi||^2$, also depends on these nuisance functions but contributes at a lower order due to the $1/N$ factor.

The doubly robust property of the proposed one-step ML estimator is particularly valuable for finite-sample inference as \Cref{thm:optimal} holds for any sample size $N$, not just asymptotically. That is, the finite-sample deviation $|(\widehat{\psi}^{-1}_{os} - \psi^{-1}) - \mathbb{Q}_N(\phi)|$ is controlled by the second-order remainder $\widehat{R}_2$ for any fixed $N$, providing approximate valid and optimal inference even in moderate sample sizes typical of capture-recapture studies. Moreover, this results provide asymptotic guarantees even when nuisance functions are estimated using flexible machine learning methods, such that the one-step estimator achieves the efficiency bound when $\mathbb{E}[\widehat{R}^2_2] = o_P(N^{-1})$. For instance, standard machine learning rates of $o_P(N^{-1/4})$ for each nuisance function suffice, as the product structure in $R_2$ squares these individual convergence rates.

\begin{remark} We assume that sample splitting is employed, where separate data folds are used for estimation of the nuisance functions and for computing the empirical average $\mathbb{Q}_N$ in the final  estimators. This cross-fitting approach controls empirical process complexity without additional model assumptions \citep{ehk_semiparametric}.
\end{remark}

This approximation to an empirical average of EIF values implies that our estimator is approximately Gaussian, enabling valid confidence interval construction using standard methods. The following theorem formalizes this result.

\begin{theorem}
   Under the conditions of \Cref{thm:optimal}, the one-step ML estimator $\widehat{\psi}^{-1}_{\text{os}}$ is approximately normally distributed. Consequently, the following is an approximately $1-\alpha$ confidence interval for the inverse capture probability $\psi^{-1}$:
\begin{equation}
\label{eq:ci_psi}
    \widehat{CI}(\psi^{-1}) = \left[\widehat{\psi}^{-1}_{os} \pm z_{1-\alpha/2} \widehat{\sigma}/\sqrt{N}\right],
\end{equation}
where $\widehat{\sigma}^2 = \widehat{var}(\widehat{\phi})$ is the unbiased empirical variance of the estimated EIF.
\label{thm:ci_psi}
\end{theorem}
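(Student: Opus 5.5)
The plan is to decompose the centered estimator as
\[
\widehat{\psi}^{-1}_{os} - \psi^{-1} = \QQ_N(\phi) + \big(\QQ_N(\widehat{\phi}) - \QQ_N(\phi)\big) + \text{(fixed-$N$ remainder)},
\]
and apply \Cref{thm:optimal}. That result shows the sum of the last two pieces is at most $\eta$ in absolute value, except with probability bounded by $\eta^{-2}\E_{\QQ}[\widehat{R}_2^2 + N^{-1}\norm{\widehat{\phi}-\phi}^2]$. The leading term $\QQ_N(\phi)$ is an average of $N$ iid mean-zero variables with variance $\sigma^2=\mathrm{var}_{\QQ}(\phi)$. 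Under a finite third moment of $\phi$ (it is bounded under positivity), the Berry--Esseen theorem gives
\[
\sup_t \big|\PP(\sqrt{N}\,\QQ_N(\phi)/\sigma \le t) - \Phi(t)\big| \le C\,\E|\phi|^3/(\sigma^3\sqrt{N}),
\]
which is a finite-sample Gaussian approximation.

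Next I would combine the two pieces into a smoothing-type bound. Let $T=\sqrt{N}(\widehat{\psi}^{-1}_{os}-\psi^{-1})/\sigma$ and $S=\sqrt{N}\,\QQ_N(\phi)/\sigma$. On the event $|T-S|\le \sqrt{N}\eta/\sigma$, we have $\{S\le t-\delta\}\subseteq\{T\le t\}\subseteq\{S\le t+\delta\}$ with $\delta=\sqrt{N}\eta/\sigma$. Therefore
\[
|\PP(T\le t)-\Phi(t)| \le \frac{C\E|\phi|^3}{\sigma^3\sqrt N} + \frac{\delta}{\sqrt{2\pi}} + \frac{\E[\widehat R_2^2 + N^{-1}\norm{\widehat\phi-\phi}^2]}{\eta^2}.
\]
I would optimize over $\eta$, for instance taking $\eta$ proportional to $(\E[\cdot]\,\sigma/\sqrt N)^{1/3}$. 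This makes the discrepancy small whenever $N\,\E[\widehat R_2^2]$ and $\E\norm{\widehat\phi-\phi}^2$ are small, which is the precise sense of ``approximately normal''.

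For the interval, it remains to replace $\sigma$ by $\widehat\sigma$. With sample splitting, $\widehat{var}(\widehat\phi)$ is unbiased for the conditional variance of $\widehat\phi$ given the training fold. Its deviation from $\sigma^2$ is controlled by $\norm{\widehat\phi-\phi}$ (via $|\mathrm{sd}(\widehat\phi)-\mathrm{sd}(\phi)|\le\norm{\widehat\phi-\phi}$) plus a sampling error of order $N^{-1/2}$ from Chebyshev, given a fourth moment. A further perturbation step, using that $\Phi$ is Lipschitz in the scale, then gives
\[
\PP\big(\psi^{-1}\in\widehat{CI}(\psi^{-1})\big) \ge 1-\alpha - \epsilon_N,
\]
with explicit $\epsilon_N$ collecting the three error sources.

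I expect the main obstacle to be this final studentization step. The estimated variance depends on the nuisance estimates, and the dependence must be handled carefully through conditioning on the training fold. It also requires extra moment and boundedness conditions on $\widehat\phi$, obtained from positivity of $\widehat\pi,\widehat q_y$. I would state these explicitly as regularity assumptions rather than derive them.
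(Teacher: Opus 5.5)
Your argument is correct, and it takes a genuinely more quantitative route than the paper. The paper's proof rewrites Theorem~1 as $\widehat{\psi}^{-1}_{os} - \psi^{-1} = \QQ_N(\phi) + \widehat{R}_2 + o_P(N^{-1/2})$. It then applies the classical CLT to $\QQ_N(\phi)$ and asserts the normal approximation with $\widehat{\sigma}$ in place of $\sigma$, remarking only that the coverage error ``depends on'' $\widehat{R}_2$. This is in effect an asymptotic Slutsky argument: it is valid when $\widehat{R}_2 = o_P(N^{-1/2})$ and $\widehat{\sigma}\to\sigma$ in probability, and it leaves the studentization step implicit.

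You instead proceed in four steps:
\begin{itemize}
\item[(i)] you replace the CLT with Berry--Esseen;
\item[(ii)] you turn Theorem~1's Chebyshev-type bound into a Kolmogorov-distance bound, using that $\Phi$ has density at most $1/\sqrt{2\pi}$;
\item[(iii)] you optimize $\eta$, which yields a discrepancy of order $(N\,\E[\widehat{R}_2^2] + \E\norm{\widehat{\phi}-\phi}^2)^{1/3}\sigma^{-2/3}$ plus an $O(N^{-1/2})$ Berry--Esseen term;
\item[(iv)] you justify studentization explicitly by conditioning on the training fold, using $|\mathrm{sd}(\widehat{\phi})-\mathrm{sd}(\phi)|\le\norm{\widehat{\phi}-\phi}$ and a fourth-moment Chebyshev bound. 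Your perturbation step needs no independence between $\widehat{\sigma}$ and $T$, since it works on the intersection of events.
\end{itemize}

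Your route buys two things. First, an explicit $\epsilon_N$, which is what the paper's ``for any sample size'' framing actually calls for. Second, a proof, rather than a tacit assumption, that $\widehat{\sigma}$ is close to $\sigma$ whenever $\norm{\widehat{\phi}-\phi}$ is small. The price has two parts:
\begin{itemize}
\item[(a)] extra regularity: $\sigma$ bounded away from zero, third and fourth moments of $\phi$ and $\widehat{\phi}$, and strong positivity (that is, $\pi$, $q_y$ and their estimates bounded away from zero, which is more than the paper's ``$>0$ with probability one''; your parenthetical ``bounded under positivity'' should be read this way);
\item[(b)] a cube-root rate loss, inherited from Theorem~1 controlling the remainder only through second moments.
\end{itemize}
Asymptotically your conditions reduce to the paper's requirement $\E[\widehat{R}_2^2]=o(N^{-1})$ together with $\E\norm{\widehat{\phi}-\phi}^2=o(1)$, so the two arguments agree in the limit.
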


In summary, the one-step estimator enjoys three critical properties that hold for any finite sample size: (i) it approximates the empirical average of the EIF with high probability, with deviations controlled by the second-order remainder term, (ii) it is approximately Gaussian enabling standard inference procedures, and (iii) it exhibits a doubly robust structure where the second-order remainder depends on products of nuisance function estimation errors, providing robustness to misspecification. These properties ensure that the estimator is minimax optimally efficient while maintaining validity under flexible machine learning estimation of the nuisance $q$-probabilities and missingness propensity score.

\begin{remark}
When there is no missingness ($R_i = 1$ for all $i$), our proposed one-step estimator reduces to the estimator developed in \cite{dulce2024population} for log-linear models under the NHOI assumption, including their ``log-linear model among the unobserved'' specification and the two list case in \cite{das2024doubly}. The present work thus extends their estimation frameworks to accommodate missing covariate information through the MAR assumption.
\end{remark}

\subsection{Population Size Estimation and Inference}

Having developed estimators for the inverse capture probability $\psi^{-1}$, we now construct the corresponding population size estimator and derive its confidence interval. The population size $n$ can be estimated via the plug-in relationship:
\begin{equation}
\label{eq:n_estimator}
    \widehat{n}_{\text{os}} = \frac{N}{\widehat{\psi}_{\text{os}}}.
\end{equation}

For inference on the population size $n$, we leverage \Cref{thm:ci_psi} for the (linear) transformation $n = N/\psi$. The variance of $\widehat{n}_{\text{os}}$ incorporates both the uncertainty in estimating $\psi$ (captured through the EIF variance) and the binomial variation in observing $N$ individuals:
\begin{equation}
\label{eq:ci_n_inference}
    \widehat{CI}(n) = \left[ \widehat{n}_{\text{os}} \pm z_{1-\alpha/2}\sqrt{\widehat{n}_{\text{os}}\left(\widehat{\psi}_{\text{os}}\widehat{\sigma}^2 + \frac{1-\widehat{\psi}_{\text{os}}}{\widehat{\psi}_{\text{os}}} \right)}\right],
\end{equation}
where $\widehat{\sigma}^2 = \widehat{\text{var}}(\widehat{\phi})$ is the empirical variance of the estimated EIF. This confidence interval maintains approximately valid coverage for any finite sample size $N$, with coverage controlled by the second-order remainder term from \Cref{thm:optimal}. This result can be seen as a direct application of  Theorem 4 in \cite{das2024doubly}, provided that $\widehat{\psi}_{os}^{-1}$ is well-approximated by an empirical average.

\section{Empirical Data Analysis}
\label{sec:experiments}

We demonstrate the practical utility of our nonparametric methodology through simulation studies evaluating finite-sample performance and for estimating mortality in the Gaza conflict using capture-recapture data from \cite{jamaluddine2025lancet}. Both applications highlight the importance of properly accounting for missing covariate information in population size estimation tasks. Code implementing our methods to reproduce all results is available at \url{https://github.com/mdulcer/population-size-missing-covariates}.

\subsection{Synthetic Capture-Recapture Data}

To evaluate the finite-sample performance of our proposed methodology, we conduct simulation studies under realistic capture-recapture settings where the true population size is known. We compare our doubly robust ML and plug-in estimators under the MAR assumption against a common approach in the literature that imputes missing covariates before estimation, examining performance as missingness rates increase from 0\% to 70\%.

\paragraph*{Data generation.} 
We fix a total population size $n=$5,000 and generate the following always-observed variables $V$ for each unit $i=1,\dots,n$:
\begin{align*}
    sex & \sim Bernoulli(p = 0.48),\\ 
    age & \sim  \mathcal{N}(\mu = 40,~ \sigma^2 = 144),
\end{align*}
where \textit{age} is truncated between 10 and 70 years. In addition, we generate the potentially missing covariate 
$$X \sim Uniform(\{1, \dots, 10\}),$$
representing an unordered categorical variable with 10 levels, that can be interpreted as a geographical region or time period.

We generate heterogeneous capture probabilities for two capture occasions under conditional independence (in this case equivalent to \Cref{ass:nhoi} - NHOI):
\begin{align*}
    \PP(Y_1 = 1 \mid V, X) & = \text{expit}(a - 0.5\cdot sex + 0.005\cdot (age-45)^{2} + (2\cdot sex - 1)\cdot \beta_X) \\
    \PP(Y_2 = 1 \mid V, X) & = \text{expit}(a + 0.5\cdot sex + 0.005\cdot (age-35)^{2} + (sex - 0.5)\cdot \beta_X),
\end{align*}
where $\beta_X = [-1, -0.6, -0.2, 0.2, 0.6, 0.0, -0.4, 0.4, 0.8, 1]$ represent category-specific effects, and $a$ is calibrated such that the marginal capture probability $\psi = 0.7$. This yields approximately $N \approx$ 3,500 observed individuals with $Y_1 + Y_2 > 0$ in each simulation. 

We simulate a missingness mechanism satisfying MAR (\Cref{ass:mar}), where missingness depends on the capture profiles $y$ and \textit{sex}, but not on the unobserved $X$ value:
\begin{align*}
    \PP(R=0 \mid y = (0,1), ~ sex = 1) & = 5r \\
    \PP(R=0 \mid y = (0,1), ~ sex = 0) & = r \\
    \PP(R=0 \mid y = (1,0)) & = 3r \\
    \PP(R=0 \mid y = (1,1)) & = r,
\end{align*}
where the base rate $r$ is calibrated to achieve marginal missingness probabilities $\PP(R=0)$ ranging from 0\% to 70\%.

\paragraph*{Estimation procedure.} We compare four approaches that vary along two dimensions: one-step ML and plug-in estimators, each computed under either MAR-based handling or \textit{a priori} imputation of missing data. Specifically, the imputation approaches first impute missing $X$ values using a Random Forest model  based on the always-observed covariates $v$ and capture profiles $y$, and then compute the proposed plug-in and one-step ML estimators treating imputed data as complete. All approaches estimate nuisance functions $q_y(v), \lambda_x(y,v),$ and $\pi(y,v)$ using Random Forest models and compute final estimates using 2-fold cross-fitting. Note that while we use Random Forest to illustrate the nonparametric estimation step, our methodology is agnostic to the choice of nuisance function estimation method.

We evaluate performance across 50 simulation replicates using three metrics: average absolute bias $\frac{1}{50}\sum_{i=1}^{50}|\widehat{n}_i - n|$, root mean squared error (RMSE) $\sqrt{\frac{1}{50}\sum_{i=1}^{50}(\widehat{n}_i - n)^2}$, and nominal 95\% confidence interval coverage $\frac{1}{50}\sum_{i=1}^{50}\mathbbm{1}(n \in \widehat{CI}_i(n))$. Note that plug-in estimators do not have well-defined variance formulas in our framework, so we use the estimated variance from the corresponding one-step estimator to construct confidence intervals for the plug-in approaches.

\paragraph*{Results.}
Figure \ref{fig:sim_bias_rmse} presents average absolute bias and RMSE as missingness increases. The MAR-based estimators maintain low bias  ($\approx$ 100 deaths) and RMSE across all missingness rates, with minimal differences between the one-step ML and plug-in estimators. In sharp contrast, imputation-based approaches exhibits rapidly increasing bias and RMSE, with the one-step with imputation exceeding 800 bias ($\approx$ 20\% error) at high missingness rates. This suggests that the overall estimation error is dominated by bias introduced in the imputation step rather than variance, where the one-step estimator's bias correction mechanism, designed for MAR settings, can amplify rather than reduce bias when applied to naively imputed data.

\begin{figure}[htpb]
\centering
\begin{minipage}{0.48\textwidth}
\centering
\includegraphics[width = \textwidth]{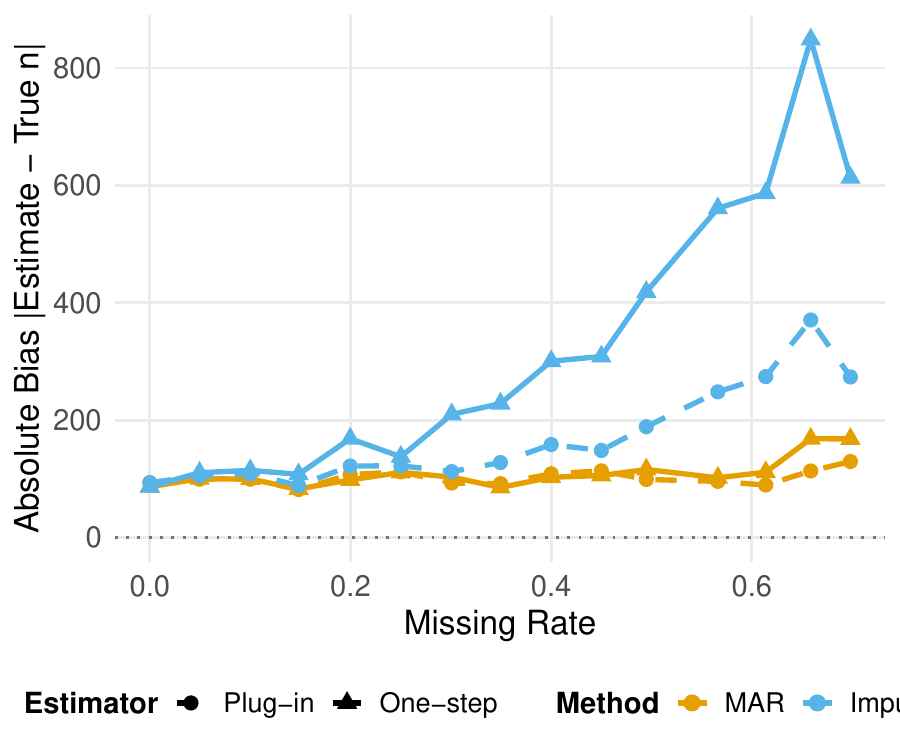}
\end{minipage}
\hfill
\begin{minipage}{0.48\textwidth}
\centering
\includegraphics[width = \textwidth]{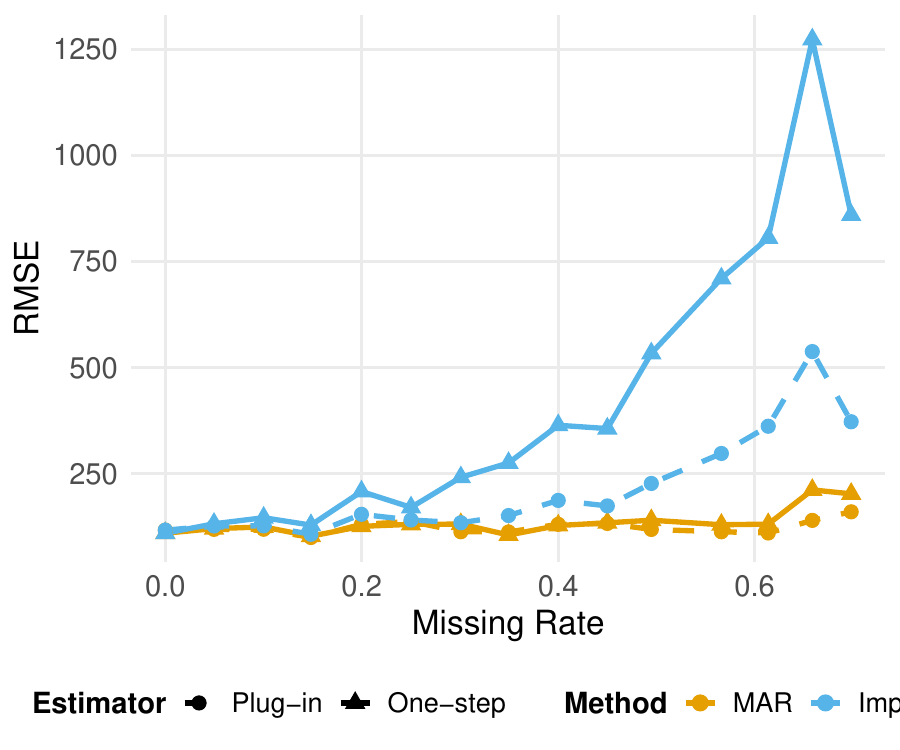}
\end{minipage}
\caption{Average absolute bias (left) and RMSE (right) across missingness rates. MAR-based estimators (orange lines) maintain consistently good performance, while naive imputation approaches (blue lines) show rapidly increasing bias and RMSE as missingness increases. Imputation-based one-step estimator (solid) exhibits particularly poor performance.}
\label{fig:sim_bias_rmse}
\end{figure}

Figure \ref{fig:sim_coverage} reveals a similar pattern in inference validity. Both MAR-based methods maintain approximately nominal coverage across all missingness scenarios, with the one-step ML estimator showing slightly better performance. On the other hand, both imputation-based approaches exhibit severe coverage reduction at high missingness rates, dropping below 75\% at 25\% missingness. That is, standard confidence intervals rapidly become invalid for inference in moderate to large missingness regimes handled using a priori imputation. Note that plug-in estimators lack well-defined variance formulas, so their coverage rates use variance estimates from the corresponding one-step estimator and should be interpreted as approximate.

\begin{figure}[htpb]
\centering
\includegraphics[width = 0.58\textwidth]{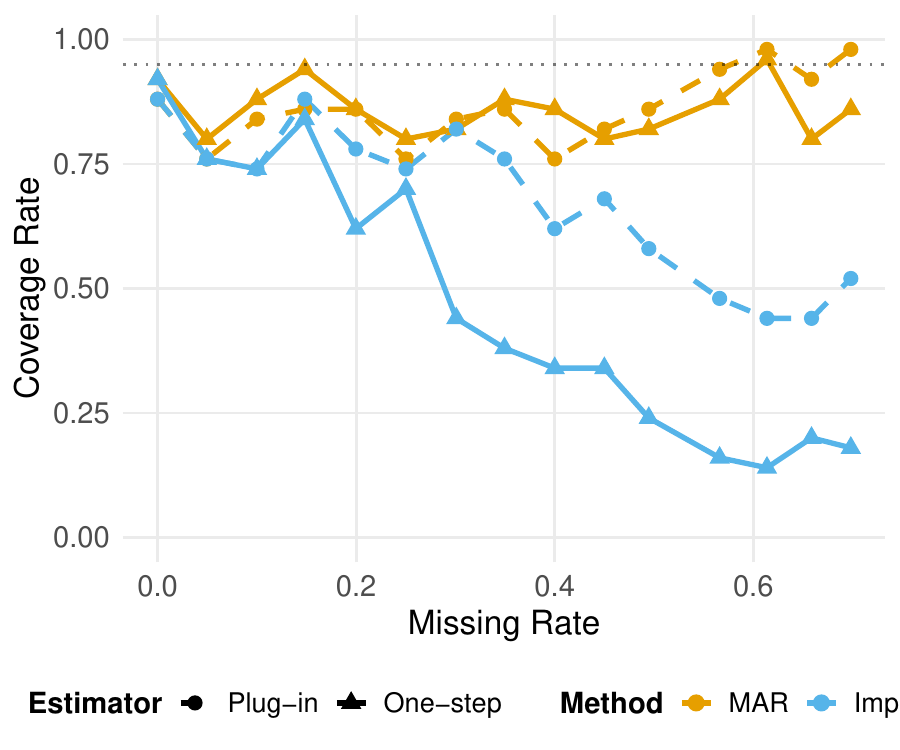}
\caption{Nominal 95\% confidence interval coverage rates. MAR-based methods (orange lines) maintain approximately valid inference across all missingness levels. Imputation-based approaches (blue lines) exhibit severe coverage reduction, with the one-step variant showing particularly poor performance. The dotted reference line indicates nominal 95\% coverage.}
\label{fig:sim_coverage}
\end{figure}

These simulation results demonstrate that properly accounting for missingness through principled frameworks is essential for valid capture-recapture inference. Naive imputation can introduce substantial bias that rapidly increases with more predominant missingness and renders confidence intervals invalid. These findings validate our theoretical contributions and underscore the practical necessity of our proposed methodology for real-world population size estimation with missing covariate data.

\subsection{Estimation of the Mortality in the Gaza Strip from Oct 7, 2023, to June 30, 2024}
\label{sec:gaza}

A recent study by \citet{jamaluddine2025lancet} estimated Gaza mortality from October 7, 2023, to June 30, 2024, using three-list capture-recapture data. Their approach: (1) imputed missing covariates on sex, age and month of death using Multiple Imputation by Chained Equations \citep{white2011multiple}, (2) fitted multiple Poisson log-linear models under the no highest-order interaction assumption with different list dependencies, and (3) averaged estimates across models. The authors reported an estimated 64,260 total deaths (95\% CI: (55,298-78,525)) due to traumatic injury during the study period, implying 41\% underreporting relative to the 37,877 official death count reported by the Palestinian Ministry of Health (MoH) for the same period.\footnote{The 37,877 figure reported by MoH includes both identified and unidentified decedents \citep{jamaluddine2025lancet}. However, to our knowledge, the specific data collection and aggregation methodologies used to construct this count are not publicly available.}

However, \citeauthor{jamaluddine2025lancet}'s approach faces two compounding issues: first, different list dependency structures represent distinct identification strategies, of which at most one can be correct; second, even if the correct dependency structure were included, the parametric log-linear models add potential misspecification bias. In contrast, our approach uses a single, weaker identification assumption (NHOI), nonparametric estimation, and directly accommodates missing covariate information under a Missing at Random assumption without requiring separate imputation or model averaging steps. In the following, we re-analyze this capture-recapture data using our  doubly robust ML framework.

\paragraph*{Data.}
We use the de-duplicated and record-matched data recording 29,271 unique confirmed deaths due to traumatic injury in the Gaza Strip between October 7, 2023, and June 30, 2024. The data consists of the following three lists:\footnote{The data is publicly available here: \url{https://github.com/ZeinaJamaluddine/gaza_mortality_capture_recapture}.}

\begin{itemize}
    \item Hospital list: Palestinian Ministry of Health hospitals recorded 22,347 identified decedents with Palestinian ID numbers, names, month and location of death, age, and sex.
    \item Survey list: Gaza MoH online survey documented 7,581 deaths with Palestinian ID numbers, names, age, sex, location, and reporting source.
    \item Social media list: 3,190   online obituaries from platforms including ``Gaza Shaheed'', ``Martyrs of Gaza'', and ``The Palestinian Information Center'', typically containing names, age, and date/location of death, originally collected by \cite{jamaluddine2025lancet}.
\end{itemize}

Covariate information was complete for 96.64\% of records (28,289 individuals). Among those with complete age and sex data, 37.2\% were female and 62.8\% male. Children under 18 accounted for 33.3\% of deaths, while adults aged 65+ represented 5.8\%. \Cref{tbl:missingness} reveals strong dependence between missingness and capture profiles: missing information predominantly occurs among individuals captured only by the social media list (L001), with 979 of 1,469 such records (66.6\%) having incomplete age or month information, compared to essentially complete data for other capture profiles.
\begin{table}[ht]
\centering
\begin{tabular}{lrrrrrrr}
\toprule
  & \multicolumn{7}{c}{Capture Profile} \\
Complete cases & L001 & L010 & L011 & L100 & L101 & L110 & L111 \\
\midrule
FALSE & 979 & 3 & 0 & 0 & 0 & 0 & 0 \\
TRUE & 490 & 5101 & 351 & 19048 & 1173 & 1929 & 197 \\
\bottomrule
\end{tabular}
\caption{Missingness patterns across capture profiles. Rows indicate complete (TRUE) or incomplete (FALSE) information for age and month of death. Columns show capture profiles where L100 = hospital only, L010 = survey only, L001 = social media only, and intersections represent multiple list captures.}
\label{tbl:missingness}
\end{table}

\paragraph*{Estimation procedure.}
We specify sex as the always-observed covariate ($V$), while age (discretized as 0-14, 15-29, 30-44, 45-59, $\geq$60 years) and month of death are used as the potentially missing covariates ($X$).\footnote{56 records (0.19\%) had missing sex information, which we imputed using Random Forest prior to analysis. Results are robust to excluding these observations.} Our Missing at Random assumption states that missingness depends on sex and capture profile but not on the unobserved age or month values themselves, a plausible assumption given the missingness concentration in the social media list, where obituary completeness likely depends on reporting practices rather than victim characteristics. Finally, to illustrate our flexible methodology, we estimate the nuisance functions $q_y(v)$, $\lambda_x(y,v)$, and $\pi(y,v)$ using Random Forest with 5-fold cross-fitting.

\paragraph*{Results.}
\Cref{tab:gaza_results} presents mortality estimates across methodologies. Our nonparametric plug-in estimator (65,294 deaths; 95\% CI: (56,557-74,030)) closely aligns with the original study's estimate. However, our doubly robust one-step ML estimator yields a more conservative estimate of 59,441 deaths (95\% CI: (50,708-68,173)), approximately 10\% lower than previous methods.

\begin{table}[ht]
\centering
\begin{tabular}{lcc}
\toprule
Method & Mortality estimate & 95\% CI \\
\midrule
Observed data & 29,271 & — \\
\citet{jamaluddine2025lancet} & 64,260 & (55,298 - 78,525) \\
Plug-in ML & 65,294 & (56,557 - 74,030) \\
One-step ML & 59,441 & (50,708 - 68,173) \\
\bottomrule
\end{tabular}
\caption{Gaza mortality estimates between October 7, 2023 and June 30, 2024. Confidence interval for the plug-in estimator uses the variance estimates from the corresponding one-step estimator and should be interpreted as approximate.}
\label{tab:gaza_results}
\end{table}

This reduction likely stems from three methodological improvements: (1) direct accommodation of missing data through augmented inverse probability weighting rather than separate imputation, (2) flexible nonparametric estimation of capture heterogeneity without parametric model specification, and (3) bias correction through the efficient influence function. Notably, while the confidence intervals from the different methods overlap, the one-step estimator produces a narrower interval, suggesting more precise estimation due to its approximately optimal efficiency properties. The doubly robust estimator's finite-sample optimality guarantees indicate it provides more reliable inference, particularly given the moderate sample size and complex missingness patterns in this conflict setting.

Critically, all capture-recapture methods indicate substantial underreporting: even our most conservative estimate suggests approximately one in four conflict deaths were not captured in the Palestinian Ministry of Health's official count of 37,877 deaths during the study period. This underscores both the humanitarian crisis and the methodological importance of robust statistical approaches for conflict casualty estimation.
\section{Discussion}

In this work we developed a doubly robust machine learning framework for population size estimation in capture-recapture settings under missing covariate information. Our approach combines a no highest-order interaction assumption in log-linear models with a missing at random assumption to achieve identification, both broadly applicable in realistic capture-recapture settings. We then leverage nonparametric efficiency theory to construct one-step estimators that maintain validity under flexible machine learning estimation of nuisance functions. The methodology provides finite-sample concentration guarantees and approximately valid inference for any sample size in a doubly robust fashion, addressing a critical gap in existing capture-recapture methods that either ignore missing data or handle it through \textit{a priori} imputation. Applications to Gaza conflict mortality estimation and simulation studies demonstrate substantial improvements over direct imputation approaches, particularly as missingness rates increase.

Our framework is particularly relevant given that record linkage, a necessary preprocessing step in most capture-recapture applications, frequently generates missing covariate patterns. When perfect unique identifiers are unavailable, prevalent in hard-to-reach populations, linking records across capture occasions may produce conflicting covariate values requiring harmonization. Common \textit{canonicalization} techniques randomly select one value according to human expert judgment \citep{binette2022almost}, potentially introducing bias into subsequent analyses and creating complex missingness patterns in the final linked dataset. While our estimation framework assumes perfect record linkage, as do most population size estimation methods, it provides a principled approach to handle such missing data under MAR with robust statistical guarantees, avoiding the need for arbitrary canonicalization decisions. Extending our framework to jointly model record linkage uncertainty and missing covariates, for instance through linkage-averaging \citep{sadinle2018bayesian}, represents an important direction for future research.

Another promising methodological research avenue is to develop individual estimators for the full conditional capture probability $q_y(V,X)$ and the conditional expectation $\mathbb{E}[\gamma(V,X)^{-1} \mid V]$ under missing covariates. Crucially, $q_y(V,X)$ plays a critical role in the second-order remainder term $\widehat{R}_2$, influencing the finite-sample properties of our estimator $\widehat{\psi}_{os}^{-1}$. In the present work, we estimate these quantities using plug-in approaches derived from the MAR identification: $\widehat{q}_y(v, x) = \widehat{\lambda}_x(y,v)\widehat{q}_y(v)/\sum_{y'\neq0}\widehat{\lambda}_x(y',v)\widehat{q}_{y'}(v)$ and  $\widehat{\mathbb{E}}[\widehat{\gamma}(V,X)^{-1} \mid V=v] = \sum_x \widehat{\gamma}^{-1}(v,x)\widehat{q}(x|v)$, respectively. Developing one-step estimators, especially for $q_y(V,X)$, and incorporating these into $\widehat{\psi}^{-1}_{os}$ could potentially further reduce the remainder term and improve finite-sample efficiency, particularly under complex covariate structures or high missingness rates. For the conditional expectation $\mathbb{E}[\gamma(V,X)^{-1} \mid V]$, more sophisticated estimation strategies become valuable when $X$ is high-dimensional or continuous. For instance, tracking population changes across geographic regions or temporal periods requires estimating these conditional expectations flexibly and efficiently. 

Our work contributes to the growing literature on nonparametric methods for total population size estimation. As capture-recapture data become increasingly available through linked administrative sources, electronic health records, and social media, principled approaches to handling data quality issues such as missing information, measurement error, and linkage uncertainty become urgent. Our doubly robust ML framework demonstrates that modern semiparametric theory can leverage machine learning methods to address these challenges while maintaining statistical rigor and providing practical inference guarantees.

\bibliographystyle{apalike}
\bibliography{bib}

\newpage
\section*{Appendix}

\begin{proof}[Proof of \Cref{lemma:if_psi}]

We show that the proposed function 
\begin{align*}
    \phi(Z) =  \sum_{y\neq 0} (-1)^{1+|y|} & \left\{ \frac{R\mathbbm{1}(Y=y)}{\pi(y,v)}\left(\frac{1}{q_y(v,x)}\left(\frac{1}{\gamma(v,x)}-1\right) - \frac{1}{q_y(v)}\mathbb{E}\left[\frac{1}{\gamma(v,X)} - 1 ~\Bigg|~ V=v\right]\right) \right.  \\ 
    & \left. +\left(\frac{\mathbbm{1}(Y=y)}{q_y(v)}\right)\mathbb{E}\left[\frac{1}{\gamma(v, X)} - 1 ~\Bigg|~ V=v\right]  \right\}  - \left(\frac{1}{\psi} - 1\right),
\end{align*}
is the Efficient Influence Function for $\psi^{-1} = \E[\gamma(V,X)^{-1}]$, by deriving the remainder term $\widehat{R}_2$ in its von Mises expansion and showing it is of second-order. Let $\bar{\psi} = \psi(\bar{\QQ})$ be the plug-in estimator of the capture probability for a generic distribution $\bar{\QQ}$. Then:
\begin{align*}
      R_2(\QQ, &\bar{\QQ}) =  \E_{\QQ}\left[\phi(Z; \bar{\QQ})\right] + \frac{1}{\bar{\psi}} - \frac{1}{\psi} \\
    =  \E_{\QQ}&\left[\sum_{y\neq 0} (-1)^{1+|y|}  \left\{ \frac{R}{\bar{\pi}(y,V)}\left(\frac{\mathbbm{1}(Y=y)}{\bar{q}_y(V,X)}\left(\frac{1}{\bar{\gamma}(V,X)}-1\right) - \frac{\mathbbm{1}(Y=y)}{\bar{q}_y(V)}\bar{\mathbb{E}}\left[\frac{1}{\bar{\gamma}(V,X)} - 1 ~\Bigg|~ V\right]\right) \right.\right. \\ 
     + & \left.\left.\left(\frac{\mathbbm{1}(Y=y)}{\bar{q}_y(V)}\right)\bar{\mathbb{E}}\left[\frac{1}{\bar{\gamma}(V, X)} - 1 ~\Bigg|~ V\right]  \right\}  - \left(\frac{1}{\bar{\psi}} - 1\right)\right] + \frac{1}{\bar{\psi}} - \frac{1}{\psi} \\ 
     =  \E_{\QQ}&\left[\sum_{y\neq 0} (-1)^{1+|y|}  \left\{ \frac{\pi(y,V)}{\bar{\pi}(y,V)}\left(\frac{q_y(V,X)}{\bar{q}_y(V,X)}\left(\frac{1}{\bar{\gamma}(V,X)}-1\right) - \frac{q_y(V)}{\bar{q}_y(V)}\left(\frac{1}{\bar{\gamma}(V,X)}-1\right)\right) \right.\right. \\ 
     + & \left.\left.\left(\frac{q_y(V)}{\bar{q}_y(V)}\right)\left(\frac{1}{\bar{\gamma}(V,X)}-1\right)  \right\}  - \left(\frac{1}{\gamma(V,X)} - 1\right)\right]  \\ 
     = \E_{\QQ}&\left[  \left(\frac{1}{\bar{\gamma}(V,X)}-1\right)\sum_{y\neq 0} (-1)^{1+|y|}\left(\frac{\pi(y,V)}{\bar{\pi}(y, V)}-1\right)\left(\frac{q_y(V,X)}{\bar{q}_y(V,X)} - \frac{q_y(V)}{\bar{q}_y(V)}\right) \right.\\
     + & \left.\left(\frac{1}{\bar{\gamma}(V,X)}-1\right)\sum_{y\neq 0} (-1)^{1+|y|} \frac{q_y(V,X)}{\bar{q}_y(V,X)} - \left(\frac{1}{\gamma(V,X)} - 1\right)\right] \\
     = \E_{\QQ}&\left[  \left(\frac{1}{\bar{\gamma}(V,X)}-1\right)\sum_{y\neq 0} (-1)^{1+|y|}\left(\frac{\pi(y,V) - \bar{\pi}(y,V)}{\bar{\pi}(y, V)}\right)\left(\frac{q_y(V,X) - \bar{q}_y(V,X)}{\bar{q}_y(V,X)} - \frac{q_y(V) - \bar{q}_y(V)}{\bar{q}_y(V)}\right) \right.\\
     & + \left(\frac{1}{\widetilde{\gamma}(V,X)}-1\right)\sum_{y_i\neq y_j \neq 0} (-1)^{|y_i| + |y_j|}\frac{(q_{y_i}(V,X) - \bar{q}_{y_i}(V,X))}{\widetilde{q}_{y_i}(V,X)}\frac{(q_{y_j}(V,X) - \bar{q}_{y_j}(V,X))}{\widetilde{q}_{y_j}(V,X)} \nonumber \\ 
    & + \left. \left(\frac{1}{\widetilde{\gamma}(V,X)}-1\right)\sum_{|y|\neq 0, ~even} \left(\frac{q_{y}(V,X) - \bar{q}_{y}(V,X)}{\widetilde{q}_{y}(V,X)}\right)^2\right],
\end{align*}
for some intermediate $\widetilde{q}(V,X)$ between $q(V,X)$ and $\widehat{q}(V,X)$. The last equality follows from the second-order Taylor expansion for
$$\frac{1}{\gamma(X)} - 1 = \exp\left(\sum_{y\neq 0} (-1)^{1+|y|}\log(q_y(V,X))\right)$$ 
around $\{\bar{q}_y(V,X)\}_{y\neq 0}$, following the strategy in \cite{dulce2024population}.

To conclude our proof, note that our nonparametric model does not restrict the tangent space \citep{semiparametric, bickel1993efficient}, which implies that $\phi(Z)$ is the (unique) EIF for $\psi^{-1}$ by Lemma 2 in \cite{kennedy2023density}.

\end{proof}

\begin{proof}[Proof of \Cref{thm:optimal}]

The result is a direct application of Markov's inequality given an error tolerance $\eta > 0$:
\begin{align*}
    \PP\left[\Bigg|\frac{1}{\widehat{\psi}_{os}} - \frac{1}{\psi} - \QQ_{N}(\phi)\Bigg| \leq \eta \right] & \geq 1- \frac{1}{\eta^2}\E\left[\Bigg|\frac{1}{\widehat{\psi}_{os}} - \frac{1}{\psi} - \QQ_{N}(\phi)\Bigg|^2\right] \\
    & = 1 - \frac{1}{\eta}\E\left[\Big|\widehat{R}_2 + (\QQ_{N} - \QQ)(\widehat{\phi} - \phi))\Big|^2\right] \\ 
    & \geq 1 - \frac{1}{\eta^2}\E\left[\widehat{R}_2^2 + \frac{1}{N}\norm{\widehat{\phi} - \phi}^2\right].
\end{align*}
The second equality follows from the definition of our one-step estimator $$\widehat{\psi}^{-1}_{os} = \widehat{\psi}^{-1}_{pi} + \QQ_{N}(\widehat{\phi}),$$ 
and the von Mises expansion \citep{bickel1993efficient, ehk_semiparametric}
$$\widehat{\psi}^{-1}_{pi} - \psi^{-1} = -\QQ(\widehat{\phi}) + \widehat{R}_2,$$
where $\widehat{R}_2$ was derived in the proof of \Cref{lemma:if_psi}. Finally, note that the EIF has mean zero ($\QQ(\phi) = 0$), concluding our proof.

\end{proof}

\begin{proof}[Proof of \Cref{thm:ci_psi}]

From Theorem \ref{thm:optimal}, we have that
$$\widehat{\psi}^{-1}_{os} - \psi^{-1} = \mathbb{Q}_N(\phi) + \widehat{R}_2 + o_P(N^{-1/2}),$$
where the deviation from $\mathbb{Q}_N(\phi)$ is controlled by the second-order remainder $\widehat{R}_2$.

Since $\mathbb{Q}_N(\phi) = \frac{1}{N}\sum_{i=1}^N \phi(Z_i)$ is an empirical average of iid random variables $Z_i$ with $\mathbb{E}[\phi(Z)] = 0$ (by construction of the EIF) and finite variance $\text{Var}(\phi) = \sigma^2$, the Central Limit Theorem implies
$$\frac{\widehat{\psi}^{-1}_{os} - \psi^{-1}}{\widehat{\sigma}/\sqrt{N}} \approx \mathcal{N}(0,1),$$
where $\widehat{\sigma}^2$ is the empirical variance of $\widehat{\phi}$.

Therefore,
$$\mathbb{P}\left(\psi^{-1} \in \left[\widehat{\psi}^{-1}_{os} \pm z_{1-\alpha/2} \widehat{\sigma}/\sqrt{N}\right]\right) \approx 1 - \alpha,$$
where the coverage error depends on the magnitude of the second-order remainder $\widehat{R}_2$.

\end{proof}

\end{document}